\def\sfrac#1#2{#1/#2}
\newcommand{\rrvert}{\vert}
\newcommand{\llvert}{\vert}
\renewcommand{\mid}{|}
\newcommand{\eqref}[1]{(\ref{#1})}
\renewcommand{\citep}[1]{(\citeauthor{#1}, \citeyear{#1})}
\newcommand{\tr}{\operatorname{tr}}
\def\rset{\mathbb R}
\def\Xset{\mathsf{X}} 
\def\PP{\mathbb{P}} 
\def\PE{\mathbb{E}} 
\newcommand{\B}{\mathcal{B}}
\newcommand{\eqdef}{\stackrel{\mathrm{def}}{=}}
\newtheorem{proposition}{Proposition}[section]
\begin{document}
\begin{frontmatter}

\title{On Russian Roulette Estimates for Bayesian Inference with Doubly-Intractable~Likelihoods\thanksref{T11}}
\runtitle{Playing Russian Roulette with Doubly-Intractable Likelihoods}
\thankstext{T11}{Code to replicate all results reported can be
downloaded from \surl{http://www.ucl.ac.uk/roulette}.}

\begin{aug}
\author[A]{\fnms{Anne-Marie}~\snm{Lyne}\corref{}\ead[label=e1]{lyne.annemarie3@gmail.com}},
\author[B]{\fnms{Mark}~\snm{Girolami}\ead[label=e2]{m.girolami@warwick.ac.uk}},
\author[C]{\fnms{Yves}~\snm{Atchad\'e}\ead[label=e3]{yvesa@umich.edu}},
\author[D]{\fnms{Heiko}~\snm{Strathmann}\ead[label=e4]{heiko.strathmann@gmail.com}}
\and
\author[E]{\fnms{Daniel}~\snm{Simpson}\ead[label=e5]{dp.simpson@gmail.com}}
\runauthor{A.-M. Lyne et al.}

\affiliation{University College London,
University of Warwick,
University of Michigan,
University College London,
University of Warwick}

\address[A]{Anne-Marie Lyne is a Ph.D. student, Department of Statistical Science,
University College London, London WC1E 6BT, United Kingdom  \printead{e1}.}
\address[B]{Mark Girolami is Professor, Department of Statistics, University of Warwick, Coventry CV4 7AL, United Kingdom \printead{e2}.}
\address[C]{Yves Atchad\'e is Associate Professor, Department of Statistics, University of Michigan, Ann Arbor, Michigan 48109, USA \printead{e3}.}
\address[D]{Heiko Strathmann is a Ph.D. student, Gatsby Computational Neuroscience Unit, University College London, London W1T 4JG, United Kingdom \printead{e4}.}
\address[E]{Daniel Simpson is a CRiSM Fellow, Department of Statistics, University of Warwick, Coventry CV4 7AL, United Kingdom \printead{e5}.}
\end{aug}

%
\begin{abstract}
A large number of statistical models are ``doubly-intractable'': the
likelihood normalising
term, which is a function of the model parameters, is intractable, as
well as the marginal likelihood (model evidence). This means that
standard inference techniques to sample from the posterior, such as
Markov chain Monte Carlo (MCMC), cannot be used. Examples include, but
are not confined to, massive Gaussian Markov random fields,
autologistic models and Exponential random graph models. A number of
approximate schemes based on MCMC techniques, Approximate Bayesian
computation (ABC) or analytic approximations to the posterior have been
suggested, and these are reviewed here. Exact MCMC schemes, which can
be applied to a subset of doubly-intractable distributions, have also
been developed and are described in this paper. As yet, no general
method exists which can be applied to all classes of models with
doubly-intractable posteriors.

In addition, taking inspiration from the Physics literature, we study
an alternative
method based on representing the intractable likelihood as an infinite
series. Unbiased estimates of the likelihood can then be obtained by
finite time stochastic truncation of the series via Russian Roulette
sampling, although the estimates are not necessarily positive. Results
from the Quantum Chromodynamics literature are exploited to allow the
use of possibly negative estimates in a pseudo-marginal MCMC scheme
such that expectations with respect to the posterior distribution are
preserved. The methodology is reviewed on well-known examples such as
the parameters in Ising models, the posterior for Fisher--Bingham
distributions on the $d$-Sphere and a large-scale Gaussian Markov
Random Field model describing the Ozone Column data. This leads to a
critical assessment of the strengths and weaknesses of the methodology
with pointers to ongoing research.

\end{abstract}

%
\begin{keyword}
\kwd{Intractable likelihood}
\kwd{Russian Roulette sampling}
\kwd{Monte Carlo methods}
\kwd{pseudo-marginal MCMC}
\end{keyword}
\end{frontmatter}

\setcounter{footnote}{1}

\section{Introduction}\label{sec1}

An open problem of growing importance in the application of Markov
chain Monte Carlo (MCMC) methods for Bayesian computation is the
definition of transition kernels for posterior distributions with
intractable data densities. In this paper, we focus on methods for a
subset of these distributions known as \emph{doubly-intractable}
distributions, a term first coined by \citet{murray2006mcmc}. To
illustrate what constitutes a doubly-intractable posterior, take some
data ${\mathbf y} \in{\mathcal Y}$ used to make posterior inferences about
the variables ${\bolds\theta} \in{\bolds\Theta}$ that
define a statistical model. A prior distribution defined by a density
$\pi({\bolds\theta})$ with respect to Lebesgue measure $d{\bolds\theta}$ is adopted and the data density is given by
$p({\mathbf y}\mid{\bolds\theta}) = f({\mathbf y};{\bolds\theta
})/{\mathcal Z}({\bolds\theta})$, where $f({\mathbf y};{\bolds
\theta})$ is an unnormalised function of the data and parameters, and
${\mathcal Z}({\bolds\theta}) = \int f({\mathbf x};{\bolds
\theta})\,d{\mathbf x}$ is the likelihood normalising term which
\emph{cannot be computed}. The posterior density follows in the usual
form as
%
%
\begin{eqnarray}\label{eqnposterior}
\pi({\bolds\theta}\mid{\mathbf y}) &=& \frac
{p({\mathbf y}\mid{\bolds
\theta}) \times\pi({\bolds\theta})}{
p({\mathbf y})}
\nonumber\\[-8pt]\\[-8pt]\nonumber
&=&
\frac{f({\mathbf y};{\bolds\theta})}{{\mathcal Z}({\bolds\theta})} \times\pi({\bolds\theta}) \times\frac{1}{p({\mathbf y})},
\end{eqnarray}
where $p({\mathbf y}) = \int p({\mathbf y}\mid{\bolds\theta}) \pi
({\bolds\theta}) \,d{\bolds\theta}$. ``Doubly-intractab\-le''
refers to the fact that not only is $p({\mathbf y})$ intractable (this is
common in Bayesian inference and does not generally present a problem
for inference), but ${\mathcal Z}({\bolds\theta})$ is also intractable.

Bayesian inference proceeds by taking posterior expectations of
functions of interest, that is,
%
%
\begin{equation}
E_{\pi({\bolds\theta}\mid{\mathbf y})} \bigl\{h({\bolds\theta})
\bigr\} = \int h({\bolds\theta})\pi({\bolds\theta}|{\mathbf y}) \,d{\bolds\theta}
\end{equation}
and Monte Carlo estimates of the above expectations can be obtained by
employing MCMC methods if other exact sampling methods are not
available (\cite{Gilks99}, \cite{RRobert+Casella2010}, \cite{Liu01}, \cite{Gelman03}). To
construct a Markov chain with invariant distribution $\pi({\bolds\theta}|{\mathbf y})$, the Metropolis--Hastings algorithm can be used; a
transition kernel is constructed by designing a proposal distribution
$q({\bolds\theta}'|{\bolds\theta})$ and accepting the
proposed parameter value with probability
%
%
\begin{eqnarray}
\alpha\bigl({\bolds\theta}', {\bolds\theta}\bigr) &=& \min
\biggl\{ 1, \frac{\pi({\bolds\theta'}|{\mathbf y}) q({\bolds\theta
}|{\bolds\theta}')}{\pi({\bolds\theta}|{\mathbf y})
q({\bolds\theta}'|{\bolds\theta})} \biggr\}
\nonumber\\[-8pt]\\[-8pt]\nonumber
&=& \min\biggl\{1,
\frac{ f({\mathbf y};{\bolds\theta'})\pi({\bolds\theta
}')q({\bolds\theta}'|{\bolds\theta}) }{ f({\mathbf
y};{\bolds\theta}) \pi({\bolds\theta}) q({\bolds\theta}'|{\bolds\theta})} \times
\frac{ {\mathcal
Z}({\bolds\theta}) }{ {\mathcal Z}({\bolds\theta}') } \biggr\}.\hspace*{-15pt}
\end{eqnarray}

Clearly, a problem arises when the value of the normalising term for
the data density, ${\mathcal Z}({\bolds\theta})$, cannot be
obtained either due to it being nonanalytic or uncomputable with a
finite computational resource. This situation is far more widespread in
modern-day statistical applications than a cursory review of the
literature would suggest and forms a major challenge to methodology for
computational statistics currently (e.g., \cite{moller2006efficient},
\cite{besag1975estimation},
\cite{besag1974spatial},
\cite{green2002hidden},
\cite{moller2003statistical}).
We review and study methods which have been published in the Statistics
and Physics literature for dealing with such distributions. We then
study in detail how to implement a pseudo-marginal MCMC scheme
(\cite{beaumont2003estimation}, \cite{andrieu2009pseudo}) in which an unbiased
estimate of the target density is required at each iteration, and
suggest how these might be realised.

This paper is organised as follows. In Section~\ref{secreview} we
describe examples of doubly-intractable distributions along with
current inference approaches. These encompass both approximate and
exact methods which have been developed in the Statistics, Epidemiology
and Image analysis literature. In Section~\ref{secPM} we suggest an
alternative approach based on pseudo-marginal MCMC
(\cite{beaumont2003estimation}, \cite{andrieu2009pseudo}) in which an unbiased
estimate of the intractable target distribution is used in an MCMC
scheme to sample from the exact posterior distribution. In
Sections~\ref{secPMdoub} and \ref{secunbiased} we describe how to
realise such unbiased estimates of a likelihood with an intractable
normalising term. This is achieved by writing the likelihood as an
infinite series in which each term can be estimated unbiasedly. Then
Russian Roulette techniques are used to truncate the series such that
only a finite number of terms need be estimated whilst maintaining the
unbiasedness of the overall estimate. Sections~\ref{secexp} and \ref
{secOzone} contain experimental results for posterior inference over
doubly-intractable distributions: Ising models, the Fisher--Bingham
distribution and a large-scale Gaussian Markov random field.
Section~\ref{secdisc} contains a discussion of the method and suggests
areas for further work.

\section{Inference Methods for Doubly-Intractable Distributions}\label{secreview}

\subsection{Approximate Bayesian Inference}

Many models describing data with complex dependency structures are
doubly-intractable. Examples which have received attention in the
Statistics literature include:

%
\begin{longlist}[2.]
\item[1.] The Ising model \citep{ising1925beitrag}. Originally formulated
in the Physics literature as a simple model for interacting magnetic
spins on a lattice. Spins are binary random variables which interact
with neighbouring spins.

\item[2.] The Potts model and autologistic models. Generalisations to the
Ising model in which spins can take more than two values and more
complex dependencies are introduced. These models are used in image
analysis (\cite{besag1986statistical}, \cite{hughes2011autologistic}), as well
as in other fields such as disease mapping
(e.g., \cite{green2002hidden}).
\item[3.] Spatial point processes. Used to model point pattern data, for
example, ecological data (e.g., \cite{silvertown2001integrating}, 
M{\o}ller and Waage\-petersen (\citeyear{moller2003statistical})) and
epidemiological data (e.g., \cite{diggle1990point}).
\item[4.] Exponential Random Graph (ERG) models. Used in the field of
social networks to analyse global network structures in terms of local
graph statistics such as the number of triangles (e.g., \cite{goodreau2009birds}).
\item[5.] Massive Gaussian Markov random field\break (GMRF) models. Used in image
analysis and spatial statistics, amongst others (e.g., \cite{RueGMRFBook}).
\end{longlist}
%

Standard Bayesian inference techniques such as drawing samples from the
posterior using MCMC cannot be used due to the intractability of the
likelihood normalising term, and hence a number of approximate
inference methods have been developed. A~common approach when the full
likelihood cannot be computed is to use a pseudo-likelihood (\cite{besag1974spatial}, \cite{besag1975estimation}), in which an approximation to
the true likelihood is formed using the product of the conditional
probabilities for each variable. This can normally be computed
efficiently and can therefore \mbox{replace} the full likelihood in an
otherwise standard inference strategy to sample from the posterior
(e.g., \cite{heikkinen1994fully}, \cite{zhou2009bayesian}). This approach
scales well with the size of the data and can give a reasonable
approximation to the true posterior, but inferences may be
significantly biased as long range interactions are not taken into
account [this has been shown to be the case for ERG models \citep
{van2009framework}, hidden Markov random fields \citep
{friel2009bayesian} and autologistic models \citep
{friel2004likelihood}]. Methods based on composite likelihoods have
also been used for inference in massive scale GMRF models, in which an
approximation to the likelihood is based on the joint density of
spatially adjacent blocks \citep{eidsvik2013estimation}. This has the
advantage that the separate parts of the likelihood cannot only be
computed more efficiently, but also computed in parallel.

Another pragmatic approach is that of \citet{green2002hidden}, in
which they discretise the interaction parameter in the Potts model to a
grid of closely spaced points and then set a prior over these values.
Estimates of the normalising term are then precomputed using
thermodynamic integration (as described by \cite{gelman1998simulating}) so that no expensive computation is
required during the MCMC run. This allowed inference to be carried out
over a model for which it would not otherwise have been possible.
However, it is not clear what impact this discretisation and use of
approximate normalising terms has on parameter inference and it seems
preferable, if possible, to retain the continuous nature of the
variable and to not use approximations unless justified.

Approximate Bayesian Computation (ABC)\break \citep{marinabc2012}, a
technique developed for likelihood free inference (\cite{tavare1997inferring}, \cite{beaumont2002approximate}), can also be used. The
types of models for which ABC was originally developed are implicit,
meaning data can be simulated from the likelihood but the likelihood
cannot be written down, and hence neither standard maximum likelihood
nor Bayesian methods can be used. For doubly-intractable distributions,
it is only the normalising term which cannot be computed, but we can
still use the techniques developed in the ABC community. ABC in its
simplest form proceeds by proposing an approximate sample from the
joint distribution, $p(\mathbf y, {\bolds\theta})$, by first
proposing ${\bolds\theta'}$ from the prior and then generating a
data set from the model likelihood conditional on $\bolds{\theta}'$. This data
set is then compared to the observed data and the proposed parameter
value accepted if the generated data is ``similar'' enough to the
observed data. An obvious drawback to the method is that it does not
sample from the exact posterior, although it has been shown to produce
comparable results to other approximate methods and recent advances
mean that it can be scaled up to very large data sets
(\cite{grelaud2009abc}, \cite{everitt2012bayesian}, \cite{moores2014pre}).

The ``shadow prior'' method of \citet{liechty2009shadow} is an
interesting attempt to reduce the computational burden of intractable
normalising constants in the case where constraints on the data or
parameters cause the intractability. As an example, take data ${\mathbf
y}\sim p({\mathbf y}|{\bolds\theta})$ which is constrained to lie in
some set $A$. Depending on the form of $A$, sampling from the posterior
$\pi({\bolds\theta}|{\mathbf y})$ can be hindered by an intractable
likelihood normalising term. The model is therefore replaced by $p({\mathbf
y}|{\bolds\delta})I({\mathbf y}\in A)$, ``shadow prior''
$p({\bolds\delta}|{\bolds\theta}) = \prod_{i =
1}^{d}\mathcal{N}(\delta_i;\theta_i,\nu)$ and prior $\pi
({\bolds\theta})$, for some $\nu$ and where $d$ is the
dimension of $\bolds\theta$. The conditional posterior
$p({\bolds\theta}|{\bolds\delta},{\mathbf y}) = p(\bolds\theta|\bolds\delta)$ no longer requires the computation of an
intractable normalising term (as dependence on the constrained data has
been removed), although updating $\bolds\delta$ does. However,
this has been reduced to $d$ one-dimensional problems which may be
simpler to deal with. The method, of course, only works if the
computational burden of the intractable normalising constant is
significantly less in the shadow prior format than in the original
model, and several examples of when this might be the case are
suggested, such as when the parameter in the normalising constant has a
complicated hyperprior structure. An approximate version can be
implemented in which the normalising constant is ignored in the shadow
prior, which can sometimes have very little impact on the final
inference. In these cases the computational burden has been eliminated.

Several approximate but consistent algorithms have been developed based
on Monte Carlo approximations within MCMC methods. For example, an
approach was developed by \citet{atchade2008bayesian} in which a
sequence of transition kernels are constructed using a consistent
estimate of ${\mathcal Z}(\bolds\theta)$ from the Wang--Landau
algorithm \citep{wang2001efficient}. The estimates of the normalising
term converge to the true value as the number of iterations increases
and the overall algorithm gives a consistent approximation to the
posterior. Bayesian Stochastic Approximation Monte Carlo \citep
{jin2014use} works in a similar fashion, sampling from a series of
approximations to the posterior using the stochastic approximation
Monte Carlo algorithm \citep{liang2007stochastic}, which is based on
the Wang--Landau algorithm. These algorithms avoid the need to sample
from the model likelihood, but in practice suffer from the curse of
dimensionality as the quality of the importance sampling estimate
depends on the number and location of the grid points. These points
need to grow exponentially with the dimension of the space limiting the
applicability of this methodology. They also require a significant
amount of tuning to attain good approximations to the normalising term,
and hence ensure convergence is achieved.

Alternative methodologies have avoided sampling altogether and instead
used deterministic approximations to the posterior distribution. This
is particularly the case for GMRF models which often have complex
parameter dependencies and are very large in scale, rendering MCMC
difficult to apply. INLA (\mbox{integrated} nested Laplace approximations)
\citep{rue2009approximate} was designed to analyse latent Gaussian
models and has been applied to massive GMRFs in diverse areas such as
spatio-temporal disease mapping \citep{schrodle2011spatio} and point
processes describing the locations of muskoxen \citep
{illian2010fitting}. By using Laplace approximations to the posterior
and an efficient programming implementation, fast Bayesian inference
can be carried out for large models. However, this benefit also
constitutes a drawback in that users must rely on standard software,
and therefore model extensions which could be tested simply when using
an MCMC approach are not easy to handle. Further, it is of course
necessary to ensure that the assumptions inherent in the method apply
so that the approximations used are accurate. It should also be noted
that the work of \citet{taylor2013inla} found that in the case of
spatial prediction for log-Gaussian Cox processes, an MCMC method using
the Metropolis-adjusted Langevin Algorithm (MALA) algorithm gave
comparable results in terms of predictive accuracy and was actually
slightly more efficient than the INLA method. Other approximations have
also been developed as part of a large body of work in the area, such
as iterative methods for approximating the log determinant of large
sparse matrices, required to compute the likelihood \citep{aune2014parameter}.

\subsection{Exact MCMC Methods}

As well as approximate inference methods, a small number of exact
algorithms have been developed to sample from doubly-intractable
posteriors. These are described below as well as advice as to when
these algorithms can be used.

\subsubsection{Introducing auxiliary variables}
An exact  sampling methodology for doubly-intractable distributions is
proposed in \citet{walker2011posterior}, which uses a similar approach
to those described in \citet{adams2009nonparametric} and Section~9 of
\citet{Beskos2006id}. A Reversible-Jump MCMC (RJMCMC) sampling scheme
is developed that cleverly gets around the intractable nature of the
normalising term. Consider the univariate distribution
$p(y|{\bolds\theta}) = f(y;{\bolds\theta})/{\mathcal
Z}({\bolds\theta}) $ where $N$ i.i.d. observations, $y_i$, are
available. In its most general form, it is required that $y$ belongs to
some bounded interval $[a, b]$, and that there exists a constant
$M<+\infty$ such that $f(y;{\bolds\theta})<M$ for all
${\bolds\theta}$ and $y$ (it is assumed that $[a,b]=[0,1]$, and
$M = 1$ in the following exposition). The method introduces auxiliary
variables $\nu\in(0,\infty)$, $k\in\{0,1,\ldots\}$, $\{s\}
^{(k)}=(s_1,\ldots,s_k)$, to form the joint density
\begin{eqnarray*}
&& f\bigl(\nu,k,\{s\}^{(k)},{\mathbf y}|{\bolds\theta}\bigr)
\\
&&\quad \propto
\frac{\exp
(-\nu)\nu^{k+N-1}}{k!}
\\
&&\qquad{}\cdot \prod_{j = 1}^{k}
\bigl(1-f(s_{j};{\bolds\theta}) \bigr){\mathbh{1}}(0<s_{j}<1)
\\
&&\qquad{}\cdot \prod_{i = 1}^{N} f(y_{i};\theta).
\end{eqnarray*}
Integrating out $\nu$ and $ s^{(k)}$ and summing over all $k$ returns
the data distribution $\prod_{i=1}^N p(y_i|{\bolds\theta})$. An
RJMCMC scheme is proposed\vspace*{1pt} to sample from the joint density $f(\nu,k,\{
s\}^{(k)},{\mathbf y}|{\bolds\theta})$ and this successfully gets
around the intractable nature of the normalising term. The scheme has
been used to sample from the posterior of a Bingham distribution \citep
{walker2014bayesian}.

However, the methodology has some limitations to its generality.
Firstly, the unnormalised density function must be strictly bounded
from above to ensure the positivity of the terms in the first product.
This obviously limits the generality of the methodology to the class of
strictly bounded functions; however, this is not overly restrictive, as
many functional forms for $f(y_{i};{\bolds\theta})$ are bounded,
for example, when there is finite support, or when
$f(y_{i};{\bolds\theta})$ takes an exponential form with
strictly negative argument. Even if the function to be sampled is
bounded, finding bounds that are tight is extremely difficult and the
choice of the bound directly impacts the efficiency of the sampling
scheme constructed; see, for example, \citet{el2008convex} for bounds
on binary lattice models. Ideally we would wish to relax the
requirement for the data, ${\mathbf y}$, to belong to a bounded interval,
but if we integrate with respect to each $s_j$ over an unbounded
interval, then we can no longer return $1-\mathcal{Z}({\bolds\theta})$ and the sum over $k$ will therefore no longer define a
convergent geometric series equaling ${\mathcal{Z}({\bolds\theta
})}$. This last requirement particularly restricts the generality and
further use of this specific sampling method for intractable distributions.

\subsection{Valid Metropolis--Hastings-Type Transition Kernels}

An ingenious MCMC solution to the doubly-in\-tractable problem was
proposed by \citet{moller2006efficient} in which the posterior state
space is extended as follows:
\begin{eqnarray*}
\pi({\bolds\theta},{\mathbf x}|{\mathbf y}) \propto p({\mathbf x}|{\bolds\theta},{
\mathbf y}) \pi({\bolds\theta}) \frac
{f({\mathbf y};{\bolds\theta})}{{\mathcal Z}(\theta)}.
\end{eqnarray*}

This extended distribution retains the posterior as a marginal. The
method proceeds by taking the proposal for ${\mathbf x},\bolds\theta
$ to be $q({\mathbf x'},{\bolds\theta'}|{\mathbf x},\bolds\theta)
= \frac{f({\mathbf x};{\bolds\theta}')}{{\mathcal Z}({\bolds\theta}')}q({\bolds\theta}'|{\bolds\theta})$, so that at
each iteration the intractable normalising terms cancel in the
Metropolis--Hastings acceptance ratio. A drawback of the algorithm is
the need to choose the marginal for ${\mathbf x}$, $p({\mathbf x}|{\bolds\theta},{\mathbf y})$, particularly as the authors suggest that ideally
this distribution would approximate the likelihood, thereby
reintroducing the intractable normalising term.

\citet{murray2006mcmc} simplified and extended the algorithm to the
Exchange algorithm, and in the process removed this difficulty by
defining a joint distribution as follows:
\begin{eqnarray*}
p\bigl({\mathbf x},{\mathbf y},{\bolds\theta},{\bolds\theta}'
\bigr) \propto\frac{f({\mathbf y};{\bolds\theta})}{{\mathcal
Z}({\bolds\theta})} \pi({\bolds\theta}) q\bigl({\bolds\theta}'|{\bolds\theta}\bigr)\frac{f({\mathbf x};{\bolds\theta
}')}{{\mathcal Z}({\bolds\theta}')}.
\end{eqnarray*}

At each iteration, MCMC proceeds by first Gibbs sampling $\bolds\theta'$ and ${\mathbf x}$, and then proposing to swap the values of
${\bolds\theta}$ and ${\bolds\theta}'$ using
Metropolis--Hastings. Again, the intractable normalising terms cancel in
the acceptance ratio. Both of these algorithms use only valid MCMC
moves and therefore target the \emph{exact} posterior, rendering them
a major methodological step forward. However, they both require the
capability to sample from the likelihood using a method such as perfect
sampling (\cite{propp1996exact}, \cite{kendall2005notes}). This can be
considered a restriction to the widespread applicability of this class
of methods, as for many models it is not possible, for example, the ERG
model in social networks. Even when perfect sampling is possible, for
example, for the Ising and Potts models, it becomes prohibitively slow
as the size of the model increases. Attempts have been made to relax
the requirement to perfectly sample by instead using an auxiliary
Markov chain to sample approximately from the model at each iteration
(\cite{caimo2011bayesian},
\cite{liang2010double},
\cite{everitt2012bayesian},
\cite{alquier2014noisy}).
In particular, the paper by \citet{alquier2014noisy} suggests multiple
approximate MCMC algorithms for doubly-intractable distributions and
then applies results from Markov chain theory to bound the total
variation distance between the approximate chains and a hypothetical
exact chain. These types of approximate algorithms were in use due to
their computational feasibility, and so it is pleasing to see some
theoretical justification for their use emerging in the Statistics literature.

\section{An Alternative Approach Using Pseudo-Marginal MCMC}\label{secPM}

As has been seen, there are many approximate methods for sampling from
doubly-intractable posteriors. There are also exact methods available,
but these can only be applied when it is possible to perfectly sample
from the data model. Now we would like to approach the question of
whether it is possible to relax this requirement and develop
methodology for exact sampling of the posterior when perfect sampling
is not possible. To do this, we develop an approach based on the
pseudo-marginal methodology
(\cite{beaumont2003estimation},
\cite{andrieu2009pseudo},
\cite{doucet2012efficient}), and
hence we now briefly review the algorithm. The pseudo-marginal class of
methods is particularly appealing in that they have the least number of
restrictions placed upon them and provide the most general and
extensible MCMC methods for intractable \mbox{distributions}. They are
sometimes referred to as Exact-approximate methods, based on the
property that the invariant distribution of the Markov chain produced
is the exact target distribution despite the use of an approximation in
the Metropolis--Hastings acceptance probability. To use the scheme, an
unbiased and positive estimate of the target density is substituted for
the true density, giving an acceptance probability of the form
%
%
\begin{eqnarray}\label{acceptprob}
\alpha\bigl({\bolds\theta}', {\bolds\theta}
\bigr) &=& \min\biggl\{ 1, \frac{\hat\pi({\bolds\theta}'|{\mathbf
y})}{\hat\pi
({\bolds\theta}|{\mathbf y})} \times\frac{q({\bolds\theta} |
{\bolds\theta
}')}{q({\bolds\theta}' | {\bolds\theta})} \biggr\}
\nonumber\\[-8pt]\\[-8pt]\nonumber
&=&
\min\biggl\{1, \frac{\hat{p}({\mathbf y}|{\bolds\theta}')\pi
({\bolds\theta
}')}{\hat{p}({\mathbf y}|{\bolds\theta})\pi({\bolds\theta})}
\times\frac{q({\bolds\theta} | {\bolds\theta
}')}{q({\bolds\theta}' | {\bolds\theta})} \biggr\},
\end{eqnarray}
where the estimate at each proposal is propagated forward as described
in \citet{beaumont2003estimation}, \citet{andrieu2009pseudo}. For the case of
doubly-intractable distributions, assuming the prior is tractable, this
equates to a requirement for an unbiased estimate of the likelihood as
seen on the right in \eqref{acceptprob} above. The remarkable feature
of this scheme is that the corresponding transition kernel has an
invariant distribution with ${\bolds\theta}$-marginal given
precisely by the desired posterior distribution, $\pi({\bolds\theta}|{\mathbf y})$. To see this, denote all the random variables
generated in the construction of the likelihood estimator by the vector
$\mathbf u$ and its density $p(\mathbf u)$. These random variables are, for
example, those used when generating and accepting a proposal value in a
Markov chain as part of a Sequential Monte Carlo estimate. The
estimator of the likelihood is denoted $\hat{p}_N(\mathbf y|{\bolds\theta},\mathbf{u})$, with $N$ symbolising, for example, the number of Monte
Carlo samples used in the estimate. The estimator of the likelihood
must be unbiased, that is,
%
%
\begin{eqnarray}
\label{unbiased} \int\hat{p}_N({\mathbf y}|{\bolds\theta},{\mathbf u}) p({
\mathbf u}) \,d {\mathbf u} = p({\mathbf y}|{\bolds\theta}).
\end{eqnarray}

A joint density for ${\bolds\theta}$ and $\mathbf u$ is now defined
which returns the posterior distribution after integrating over~$\mathbf u$:
\begin{eqnarray*}
\pi_N({\bolds\theta},{\mathbf u}|{\mathbf y}) & \propto&
\hat{p}_N({\mathbf y}|{\bolds\theta},{\mathbf u}) \pi({\bolds\theta})p({\mathbf u})
\\
& =& \frac{\hat{p}_N({\mathbf y}|{\bolds\theta},{\mathbf u}) \pi
({\bolds\theta})p({\mathbf u})}{p({\mathbf y})}.
\end{eqnarray*}

It is simple to show using equation~\eqref{unbiased} that $\pi
_N({\bolds\theta},\mathbf u|\mathbf y)$ integrates to $1$ and has the
desired marginal distribution for ${\bolds\theta}|\mathbf y$. Now
consider sampling from $\pi_N({\bolds\theta},\mathbf u|\mathbf y)$
using the Metropolis--Hastings algorithm, with the proposal distribution
for $\mathbf u'$ being $p(\mathbf u')$. In this case the densities for $\mathbf u$
and $\mathbf u'$ cancel and we are using the acceptance probability in
\eqref{acceptprob}. Hence, this algorithm samples from $\pi
_N({\bolds\theta},\mathbf u|\mathbf y)$ and the samples of ${\bolds\theta}$ obtained are distributed according to the posterior.

This is a result that was highlighted in the statistical genetics
literature \citep{beaumont2003estimation}, then popularised and
formally analysed in \citet{andrieu2009pseudo} with important
developments such as Particle MCMC \citep{doucet2012efficient} proving
to be extremely powerful and useful in a large class of statistical
models. Due to its wide applicability, the pseudo-marginal algorithm
has been the subject of several \mbox{recent} papers in the statistical
literature, increasing understanding of the methodology. These have
covered how to select the number of samples in the unbiased estimate to
minimise the computational time \citep{doucet2012efficient}, optimal
variance and acceptance rates to maximise efficiency of the chain
\citep{sherlock2013efficiency} and results to order two different
pseudo-marginal implementations in terms of the acceptance probability
and asymptotic variance \citep{andrieu2014establishing}. It is
interesting to note that the problem of Exact-Approximate inference was
first considered in the Quantum Chromodynamics literature almost thirty
years ago. This was motivated by the need to reduce the computational
effort of obtaining values for the strength of bosonic fields in
defining a Markov process to simulate configurations following a
specific law; see, for example
\citet{kennedy1985noise},
\citet{bhanot1985bosonic},
\citet{bakeyev2001noisy},
\citet{lin2000noisy},
\citet{joo2003kentucky}.


\subsection{Proposed Methodology}

One can exploit the pseudo-marginal algorithm to sample from the
posterior, and hence we require unbiased estimates of the likelihood.
For each $\bolds\theta$~and~$\mathbf{y}$, we\vspace*{1pt} show that one
can construct random variables $\{V^{(j)}_{\bolds\theta}, j\geq
0\}$ (where dependence on ${\mathbf y}$ is omitted) such that the series
defined as
\[
\pi\bigl(\bolds\theta,\bigl\{V^{(j)}_{{\bolds\theta}}\bigr\}
|{\mathbf y}
\bigr):= \sum_{j= 0}^\infty
V^{(j)}_{{\bolds\theta}}
\]
is finite almost surely, has finite expectation, and $\mathbb{E}
(\pi({\bolds\theta},\{V^{(j)}_{\bolds\theta}\} |
{\mathbf y}) )=\pi({\bolds\theta}|{\mathbf y})$. We propose a
number of ways to construct such series. Although unbiased, these
estimators are not practical, as they involve infinite series. We
therefore employ a computationally feasible truncation of the infinite
sum which, crucially, remains unbiased. This is achieved using Russian Roulette procedures well known in the Physics \mbox{literature}
(\cite{hendricks1985mcnp}, \cite{carter1975particle}). More precisely, we introduce a
random time $\tau_{{\bolds\theta}}$, such that with ${\mathbf
u}:=(\tau_{{\bolds\theta}},\{V^{(j)}_{\bolds\theta},
0\leq j\leq\tau_{{\bolds\theta}}\})$ the estimate
\begin{eqnarray*}
\pi({\bolds\theta},{\mathbf u}|{\mathbf y}) := \sum_{j= 0}^{\tau
_{\bolds\theta}}
V^{(j)}_{\bolds\theta}
\end{eqnarray*}
satisfies
\begin{eqnarray*}
 \mathbb{E} \bigl(\pi({
\bolds\theta},{\mathbf u}|{\mathbf y}) |\bigl\{V^{(j)}_{\bolds\theta}, j
\geq0\bigr\} \bigr)
=\sum_{j= 0}^\infty
V^{(j)}_{{\bolds\theta}}.
\end{eqnarray*}

As in the notation used above, $\mathbf u$ is a vector of all the random
variables used in the unbiased estimate, that is, those used to
estimate terms in the series, as well as those used in the roulette
methods to truncate the series. As the posterior is only required up to
a normalising constant in $\mathbf y$ and the prior is assumed tractable,
in reality we require an unbiased estimate of the likelihood.

\subsection{The Sign Problem}

If the known function $f({\mathbf y};{\bolds\theta})$ forming the
estimate of the target is bounded, then the whole procedure can proceed
without difficulty, assuming the bound provides efficiency of sampling.
However, in the more general situation where the function is not
bounded, there is a complication here in that the unbiased estimate
$\pi({\bolds\theta},{\mathbf u}| {\mathbf y})$ is not guaranteed to be
positive (although its expectation is nonnegative). This issue
prevents us from plugging in directly the estimator $\pi({\bolds\theta},{\mathbf u}|{\mathbf y})$ in the pseudo-marginal framework for the
case of unbounded functions. The problem of such unbiased estimators
returning negative valued estimates turns out to be a well-studied
issue in the Quantum Monte Carlo literature; see, for example, \citet
{lin2000noisy}. The problem is known as the Sign Problem,\footnote
{Workshops devoted to the Sign Problem, for example, the International
Workshop on the Sign Problem in QCD and Beyond, are held regularly,
\surl{http://www.physik.\\uni-regensburg.de/sign2012/}.} which in its
most general form is NP-hard (nondeterministic polynomial time hard)
\citep{PhysRevLett94170201} and at present no general and practical
solution is available. Indeed, recent work by \citet{jacob2013non}
showed that given unbiased estimators of $\lambda\in\mathbb{R}$, no
algorithm exists to yield an unbiased estimate of $f(\lambda)\in
\mathbb{R}^{+}$, where $f$ is a nonconstant real-valued function.
Therefore, we will need to apply a different approach to this problem.

We follow \citet{lin2000noisy} and show that with a weighting of
expectations it is still possible to compute any integral of the form
$\int h({\bolds\theta})\pi({\bolds\theta}|{\mathbf
y})\,d{\bolds\theta}$ by Markov chain Monte Carlo.

\begin{sloppypar}
Suppose that we have an unbiased, but not necessarily positive,
estimate of the likelihood $\hat{p}({\mathbf y}|{\bolds\theta},{\mathbf
u})$ and we wish to sample from $\pi({\bolds\theta},{\mathbf
u}|{\mathbf y}) =\break \hat{p}({\mathbf y}|{\bolds\theta},{\mathbf u})\pi
({\bolds\theta})p({\mathbf u})/p({\mathbf y})$, where $p({\mathbf y}) =
\int\!\!\int p({\mathbf y}|\break {\bolds\theta},{\mathbf u})\pi({\bolds\theta
})p({\mathbf u}) \,d{\bolds\theta} \,d{\mathbf u}$ is an intractable
normaliser. Although $\pi({\bolds\theta},\mathbf u|\mathbf y)$
integrates to one, it is not a probability, as it is not necessarily
positive. Define $\sigma({\mathbf y}|\bolds\theta,{\mathbf u}):=\textsf
{sign}(\hat{p}({\mathbf y}|\bolds\theta,{\mathbf u}))$, where $\textsf
{sign}(x)=1$ when $x>0$, $\textsf{sign}(x)=-1$ if $x<0$ and $\textsf
{sign}(x)=0$ if $x=0$. Furthermore, denote $|\hat{p}(\mathbf y|\bolds\theta,\mathbf u)|$ as the absolute value of the measure, then we have
$\hat{p}({\mathbf y}|\bolds\theta,{\mathbf u})= \sigma({\mathbf
y}|\bolds\theta,{\mathbf u}) |\hat{p}(\mathbf y|\bolds\theta,\mathbf u)|$.
\end{sloppypar}

Suppose that we wish to compute the expectation
%
%
\begin{eqnarray}
\label{eqexpect} \qquad\quad\int h({\bolds\theta})\pi({\bolds\theta
}|{\mathbf y})\,d{
\bolds\theta}=\int\!\!\int h({\bolds\theta}) \pi({\bolds\theta},{\mathbf
u}| {\mathbf y}) \,d {\mathbf u} \,d {\bolds\theta}.
\end{eqnarray}

We can write the above integral as
%
%
\begin{eqnarray}
\label{eqacceptabs}
&& \int h({\bolds\theta})\pi({\bolds\theta
}|{\mathbf y})\,d{\bolds\theta}\nonumber
\\
&&\quad = \int\!\!\int h({\bolds\theta}) \pi(\bolds\theta,{\mathbf
u} | {\mathbf y}) \,d\mathbf{u} \,d {\bolds\theta}
\nonumber
\\
&&\quad = \frac{1}{p({\mathbf y})}\int\!\!\int h({\bolds\theta}) \hat{p}({\mathbf y}|
\bolds\theta,{\mathbf u}) \pi(\bolds\theta)p({\mathbf u}) \,d
{\mathbf u} \,d{
\bolds\theta}
\\
&&\quad = \frac{\int\!\!\int h({\bolds\theta})\sigma({\mathbf
y}|\bolds\theta,{\mathbf u}) |\hat{p}({\mathbf y}|\bolds\theta
,{\mathbf u})| \pi(\bolds\theta)p({\mathbf u}) \,d{\mathbf u}
\,d{\bolds\theta}}{\int\!\!\int\sigma({\mathbf y}|\bolds\theta,{\mathbf u}) |\hat{p}({\mathbf y}|\bolds\theta,{\mathbf u})| \pi
(\bolds\theta)p({\mathbf u}) \,d{\mathbf u} \,d{\bolds\theta}}\hspace*{-20pt}
\nonumber
\\
&&\quad =\frac{\int\!\!\int h({\bolds\theta})\sigma({\mathbf
y}|\bolds\theta,{\mathbf u}) \check\pi(\bolds\theta,{\mathbf u}|{\mathbf y})
\,d{\mathbf u} \,d{\bolds\theta}}{\int\!\!\int\sigma
({\mathbf y}|\bolds\theta,{\mathbf u}) \check\pi(\bolds\theta
,{\mathbf u}|{\mathbf y}) \,d{\mathbf u} \,d{\bolds\theta}},\nonumber
\end{eqnarray}
where $\check\pi(\bolds\theta,\mathbf{u}|\mathbf y)$ is the distribution
\begin{eqnarray*}
\check\pi(\bolds\theta,{\mathbf u}|{\mathbf y}):= \frac{|\hat
{p}({\mathbf y}|\bolds\theta,{\mathbf u})|\pi(\bolds\theta
)p({\mathbf u}) }{\int\!\!\int|\hat{p}({\mathbf y}|\bolds\theta,{\mathbf
u})|\pi(\bolds\theta)p({\mathbf u}) \,d{\mathbf u} \,d{\bolds\theta}}.
\end{eqnarray*}

We can sample from $\check\pi(\bolds\theta,{\mathbf u}|{\mathbf y}) $
using a pseudo-marginal scheme. At each iteration we propose a new
value $\bolds\theta'$, generate an unbiased estimate of the
likelihood $p({\mathbf y}|\bolds\theta',{\mathbf u}')$, and accept it
with probability
\begin{eqnarray*}
\min\biggl\{1, \frac{|\hat{p}(\mathbf y|\bolds\theta',\mathbf{u}')|\pi
(\bolds\theta
')}{|\hat{p}(\mathbf y|\bolds\theta,\mathbf{u})|\pi(\bolds\theta)} \times
\frac{q({\bolds\theta} | {\bolds\theta
}')}{q({\bolds\theta}' | {\bolds\theta})} \biggr\},
\end{eqnarray*}
remembering to save the sign of the accepted estimate. We can then use
Monte Carlo to estimate the expectation in \eqref{eqexpect} using
\eqref{eqacceptabs} with
%
%
\begin{equation}
\label{eqMCsign} \int h({\bolds\theta})\pi({\bolds\theta
}|{\mathbf y})\,d{
\bolds\theta}= \frac{\sum_{i = 1}^{N} h({\bolds\theta_i})\sigma(\mathbf y|\bolds\theta_i,\mathbf{u}_i) }{\sum_{i = 1}^{N}
\sigma(\mathbf y|\bolds\theta_i,\mathbf{u}_i) }.\hspace*{-20pt}
\end{equation}

The output of this MCMC procedure gives an importance-sampling-type
estimate for the desired expectation $\int h({\bolds\theta})\pi
({\bolds\theta}|{\mathbf y})\,d{\bolds\theta}$, which is
consistent but biased (as with estimates from all MCMC methods).
\mbox{Importantly}, this methodology gives us freedom to use unbiased
estimators which may occasionally return negative estimates. We
describe the procedure more systematically in the \hyperref[app]{Appendix}
(Section~\ref{secabsmeas}), and we discuss in particular how to
compute the effective sample size of the resulting Monte Carlo estimate.

The following section addresses the issue of constructing the unbiased
estimator to be used in the overall MCMC scheme.

\section{Pseudo-Marginal MCMC for Doubly-Intractable
Distributions}\label{secPMdoub}

The foundational component of pseudo-marginal MCMC is the unbiased and
positive estimator of the target density. In the methodology developed
here, it is not essential for the estimate of the intractable
distribution to be strictly positive and we exploit this
characteristic. Note that whilst there are many methods for unbiasedly
estimating ${\mathcal Z}(\bolds\theta)$, such as importance
sampling, Sequential Monte Carlo (SMC) \citep{delmoral2006} and
Annealed Importance Sampling (AIS) \citep{Neal98annealedimportance},
if we then take some nonlinear function of the estimate, for example,
the reciprocal, the overall estimate of the likelihood is no longer unbiased.

It is possible to directly construct an estimator of $1/{\mathcal
{Z}(\bolds{\theta})}$ using an instrumental density $q({\mathbf y})$ as follows:
\begin{eqnarray*}
\frac{1}{\mathcal{Z}({\bolds\theta})} &=& \frac{1}{\mathcal
{Z}({\bolds\theta})} \int q({\mathbf y}) \,d {\mathbf y} =
\int\frac{q({\mathbf y})}{f({\mathbf y};\theta)} p({\mathbf y}|{\bolds\theta
})\,d {\mathbf y}
\\
&\approx&
\frac{1}{N} \sum_{i = 1}^{N}
\frac
{q({\mathbf y}_i)}{f({\mathbf y}_i;\theta)}, \quad{\mathbf y}_i \sim p(\cdot
|{\bolds\theta});
\end{eqnarray*}
however, this requires the ability to sample from the likelihood, and
if we can do this, then we can implement the Exchange algorithm.
Further, the variance of the estimate depends strongly on the choice of
the instrumental density. A biased estimator can be constructed by
sampling the likelihood using MCMC (e.g., \cite{zhang2012continuous}), but a pseudo-marginal scheme based on
this estimate will not target the correct posterior distribution. Very
few methods to estimate $1/{\mathcal{Z}({\bolds\theta})}$ can
be found in the Statistics or Physics literature, presumably because in
most situations a consistent estimate will suffice. Therefore, we have
to look for other ways to generate an unbiased estimate of the likelihood.

In outline, the intractable distribution is first written in terms of a
nonlinear function of the nonanalytic/computable normalising term. For
example, in equation (\ref{eqnposterior}), the nonlinear function is
the reciprocal $1/{\mathcal Z}({\bolds\theta})$, and an
equivalent representation would be $\exp(-\log{\mathcal
Z}({\bolds\theta}))$. This function is then represented by a
convergent Maclaurin expansion which has the property that each term
can be estimated unbiasedly using the available unbiased estimates of
$\hat{\mathcal Z}({\bolds\theta})$. The infinite series
expansion is then stochastically truncated without introducing bias so
that only a finite number of terms need be computed. These two
components---(1) unbiased independent estimates of the normalising
constant, and (2) unbiased stochastic truncation of the infinite series
representation---then produce an unbiased, though not strictly
positive, estimate of the intractable distribution. The final two
components of the overall methodology consist of (3) constructing an
MCMC scheme which targets a distribution proportional to the absolute
value of the unbiased estimator, and then (4) computing Monte Carlo
estimates with respect to the desired posterior distribution as
detailed in the previous section.

This method has its roots in several places in the Statistics and
Physics literature. In the Physics literature, researchers used a
similar method to obtain unbiased estimates of $\exp(-U(x))$ when only
unbiased estimates of $U(x)$ were available (\cite{kennedy1985noise}, \cite{bhanot1985bosonic}). They further showed that even
when using such unbiased estimates in place of the true value, detailed
balance still held. The method for realising the unbiased estimates at
each iteration is also similar to that suggested by \citet
{booth2007unbiased}, in which he described a method for unbiasedly
estimating the reciprocal of an integral, which is of obvious relevance
to our case. In the Statistics literature, \citet{douc2011vanilla}
used a geometric series to estimate an inverse probability, and
\citet{Beskos2006id}, \citet{Fearnhead2008kk} also used techniques to truncate a
series unbiasedly in their work on likelihood estimation for stochastic
diffusions. Finally, both \citet{rhee2012new} and \citet
{mcleish2011general} use roulette methods to realise an unbiased
estimate when only biased but consistent estimates are available. This
is achieved by writing the quantity to be unbiasedly estimated as an
infinite series in which each term is a function of the consistent
estimates which can be generated, and then truncating the series using
roulette methods.

In the following sections, we study two series expansions of a
doubly-intractable likelihood, in which each term can be estimates
unbiasedly using unbiased estimates of ${\mathcal Z}({\bolds\theta})$. Following this comes a description of unbiased truncation methods.

\subsection{Geometric Series Estimator}\label{secgeo}

In the following discussion we show how the intractable likelihood can
be written as a geometric series in which each term can be estimated
unbiasedly. Take a biased estimate of the likelihood $\tilde{p}({\mathbf
y} |{\bolds\theta}) = f({\mathbf y}; {\bolds\theta})/\widetilde
{\mathcal Z}({\bolds\theta})$, where $\widetilde{\mathcal
Z}({\bolds\theta})>0$ is ideally an upper bound on ${\mathcal
Z}(\bolds{\theta})$ or, alternatively, an unbiased importance sampling
estimate or a deterministic approximation. Then, using a multiplicative
correction
%
%
\begin{equation}
p({\mathbf y}|\bolds\theta) = \tilde{p}({\mathbf y}|\bolds\theta)
\times c({
\bolds\theta}) \Biggl[1 + \sum_{n=1}^{\infty}
\kappa({\bolds\theta})^n \Biggr],
\end{equation}
where $\kappa({\bolds\theta}) = 1 - c({\bolds\theta
}){{\mathcal Z}({\bolds\theta})}/{\widetilde{\mathcal
Z}({\bolds\theta})}$ and $c({\bolds\theta})$ ensures
$|\kappa({\bolds\theta})| < 1$, the convergence of a geometric
series gives
\begin{eqnarray*}
\tilde{p}({\mathbf y}|{\bolds\theta}) \times c({\bolds\theta})
\Biggl[1 +
\sum_{n=1}^{\infty}\kappa({\bolds\theta
})^n \Biggr]
&=& \tilde{p}({\mathbf y}|{\bolds\theta}) \times
\frac{c({\bolds\theta})}{1 - \kappa({\bolds\theta})}
\\
&=& \tilde{p}({\mathbf
y}|{\bolds\theta}) \times
\frac{\widetilde{\mathcal Z}({\bolds\theta})}{{\mathcal Z}({\bolds\theta})}
\\
&=&
p({\mathbf y}|{\bolds\theta}).
\end{eqnarray*}

Based on this equality, and with an infinite number of independent
unbiased estimates of ${\mathcal Z}({\bolds\theta})$ each
denoted $\hat{\mathcal Z}_i({\bolds\theta})$, an unbiased
estimate of the target density is
%
%
\begin{eqnarray}\label{geomest}
\hat\pi({\bolds\theta}|{\mathbf y}) &=& \frac{\pi
(\bolds\theta)\tilde{p}(\mathbf y|\bolds\theta)}{p(\mathbf y)}
\nonumber\\[-8pt]\\[-8pt]\nonumber
&&{}\cdot
c({\bolds\theta}) \Biggl[1 + \sum_{n=1}^{\infty}
\prod_{i=1}^n \biggl(1 - c({\bolds\theta})\frac{\hat{\mathcal
Z}_i({\bolds\theta})}{\widetilde{\mathcal Z}({\bolds\theta})} \biggr) \Biggr].\hspace*{-20pt}
\end{eqnarray}

Notice that the series in (\ref{geomest}) is finite a.s. and we can
interchange summation and expectation if
\begin{eqnarray*}
E \biggl(\biggl\llvert1 - c({\bolds\theta})\frac{\hat{\mathcal
Z}_i({\bolds\theta})}{\widetilde{\mathcal Z}({\bolds\theta})}\biggr\rrvert
\biggr)<1.
\end{eqnarray*}

Since $E(|X|)\leq E^{1/2}(|X|^2)$, a sufficient condition for this is
$0<c(\bolds\theta)<2\widetilde{\mathcal Z}({\bolds\theta
}) {\mathcal Z}({\bolds\theta})/ E (\hat{\mathcal
Z}^2_1({\bolds\theta}) )$, which is slightly more
stringent than $|\kappa(\bolds{\theta})|<1$. Under this assumption, the
expectation of $\hat\pi({\bolds\theta}|{\mathbf y})$ is
\begin{eqnarray*}
\nonumber
&& E \bigl\{\hat\pi({\bolds\theta}|{\mathbf y})|\widetilde{\mathcal Z}({
\bolds\theta}) \bigr\}
\\
&&\quad = \frac{\pi(\bolds\theta
) \tilde{p}(\mathbf y|\bolds\theta)}{p(\mathbf y)}\nonumber
\\
&&\qquad{}\cdot c({\bolds\theta})
\Biggl[1 + \sum_{n=1}^{\infty}\prod
_{i=1}^n \biggl(1 - c({\bolds\theta})
\frac{E \{\hat{\mathcal
Z}_i({\bolds\theta}) \}}{\widetilde{\mathcal
Z}({\bolds\theta})} \biggr) \Biggr]
\\
&&\quad = \frac{\pi(\bolds\theta) \tilde{p}({\mathbf y}|\bolds\theta
)}{p(\mathbf y)}\times c({\bolds\theta}) \Biggl[1 + \sum
_{n=1}^{\infty}\kappa({\bolds\theta})^n \Biggr]\nonumber
\\
\nonumber
&&\quad  = \pi({\bolds\theta}|{\mathbf y}).
\end{eqnarray*}

Therefore, the essential property $E \{\hat\pi({\bolds\theta}|{\mathbf y}) \} = \pi({\bolds\theta}|{\mathbf y})$
required for Exact-Approximate MCMC is satisfied by this geometric
correction. However, there are difficulties with this estimator. It
will be difficult in practice to find $c(\bolds\theta)$ that
ensures the series in (\ref{geomest}) is convergent in the absence of
knowledge of the actual value of ${\mathcal Z}({\bolds\theta})$.
By ensuring that $\widetilde{\mathcal Z}({\bolds\theta
})/c({\bolds\theta})$ is a strict upper bound on ${\mathcal
Z}({\bolds\theta})$, denoted by ${\mathcal Z}_U$, guaranteed
convergence of the geometric series is established. Even if an upper
bound is available, it may not be computationally practical, as upper
bounds on normalising constants are typically loose (see, e.g., \cite{el2008convex}), making the ratio ${\mathcal Z}({\bolds\theta})/{\mathcal Z}_U$ extremely small, and, therefore, $\kappa
({\bolds\theta})\approx1$; in this case, the convergence of the
geometric series will be slow. A more pragmatic approach is to use a
pilot run at the start of each iteration to characterise the location
and variance of the ${\mathcal Z}({\bolds\theta})$ estimates,
and use this to conservatively select $\widetilde{\mathcal
Z}({\bolds\theta})/c({\bolds\theta})$ such that the series converges. Of
course, if the distribution of the estimates is not well enough
characterised, then we may not be able to guarantee with probability 1
that $|\kappa({\bolds\theta})|<1$, and hence approximation will
be introduced into the chain.

In the next section we describe an alternative to the geometric series
estimator which does not have the practical issue of ensuring the
region of convergence is maintained.

\subsection{Unbiased Estimators Using an Exponential Auxilliary
Variable}\label{secmaclaurin}

In this section we show how the introduction of an auxiliary variable
can enable the posterior density to be written in terms of a Taylor
series expansion of the exponential function. The introduction of $\nu
\sim\textsf{Expon}({\mathcal Z}({\bolds\theta}))$ defines a
joint distribution of the form of
\begin{eqnarray*}
\pi({\bolds\theta}, \nu|{\mathbf y})&=& \bigl[{\mathcal Z}({\bolds\theta})
\exp\bigl(-\nu{\mathcal Z}({\bolds\theta})\bigr) \bigr]
\\
&&{}\cdot
\frac{f({\mathbf y};{\bolds\theta})}{{\mathcal Z}({\bolds\theta})} \times\pi({\bolds\theta}) \times\frac{1}{p({\mathbf y})}
\\
&=& \exp\bigl( -\nu{\mathcal Z}({\bolds\theta}) \bigr) \times
f({\mathbf y};{
\bolds\theta}) \times\pi({\bolds\theta}) \times\frac
{1}{p({\mathbf y})}
\\
&=& \Biggl[ 1 + \sum_{n = 1}^{\infty}
\frac{(-\nu{\mathcal
Z}({\bolds\theta}) )^n}{n!} \Biggr]
\\
&&{}\cdot f({\mathbf y};{\bolds\theta}) \times\pi({
\bolds\theta}) \times\frac{1}{p({\mathbf y})}.
\end{eqnarray*}

Integrating over $\nu$ returns the posterior distribution and,
therefore, if we sample from this joint distribution, our $\bolds\theta$ samples will be distributed according to the posterior. As
hinted at in the previous section, the methods used to truncate the
series are more computationally feasible if the series converges
quickly. Therefore, we introduce $\widetilde{{\mathcal Z}}({\bolds\theta})$, which is preferably an upper bound on ${\mathcal
Z}({\bolds\theta})$ or, if unavailable, some other
approximation. The exponential can then be expanded as follows:
\begin{eqnarray*}
\exp\bigl(-\nu{{\mathcal Z}}({\bolds\theta})\bigr) &=& \exp\bigl
(-\nu\widetilde
{{\mathcal Z}}({\bolds\theta})\bigr)
\\
&&{}\cdot \exp\bigl(\nu\bigl
(\widetilde{{
\mathcal Z}}({\bolds\theta}) - {\mathcal Z}({\bolds\theta
})\bigr)
\bigr)
\\
&=& \exp\bigl(-\nu\widetilde{{\mathcal Z}}({\bolds\theta})\bigr)
\\
&&{}\cdot
\Biggl(1 + \sum_{n=1}^{\infty} \frac{\nu^n}{n!}
\bigl( \widetilde{{\mathcal Z}}({\bolds\theta}) -{\mathcal
Z}({\bolds\theta}) \bigr)^n \Biggr).
\end{eqnarray*}

If $\widetilde{\mathcal Z}({\bolds\theta})$ is an upper bound on ${\mathcal
Z}(\bolds\theta)$, then its introduction prevents the terms in
the Taylor series from alternating in sign by ensuring the exponent is
positive; this helps to reduce the impact of returning negative
estimates. Even if $\widetilde{\mathcal Z}({\bolds\theta})$ is not a strict
upper bound, its presence reduces the absolute value of the exponent,
which improves the convergence properties of the series, and therefore
makes the truncation methods described in the next section more efficient.

An unbiased estimator of the series is
%
%
\begin{eqnarray}\label{poissonestimator}
&& \widehat{\exp\bigl(-\nu{\mathcal
Z}({\bolds\theta})\bigr)}\nonumber
\\
&&\quad  =
\exp\bigl(-\nu\widetilde{{\mathcal Z}}({\bolds\theta})\bigr)
\\
&&\qquad{}\cdot \Biggl[1 + \sum
_{n=1}^{\infty} \frac{\nu^n}{n!}\prod
_{i=1}^n \bigl({\widetilde{\mathcal Z}}({\bolds\theta}) - \hat{\mathcal Z}_i({\bolds\theta}) \bigr) \Biggr],\nonumber
\end{eqnarray}
where $\{\hat{\mathcal Z}_i({\bolds\theta}), i\geq1\}$ are i.i.d.
random variables with expectation equal to ${\mathcal Z}({\bolds\theta
})$. The magnitude of the exponent can present computational barriers
to the implementation of this scheme; if ${\mathcal Z}({\bolds\theta})$
is very large, it is easier to carry out the division $\hat{\mathcal
Z}({\bolds\theta}) / {\mathcal Z}({\bolds\theta})$ in \eqref{geomest}
(which can be computed in log space) than the subtraction ${\mathcal
Z}({\bolds\theta}) - \hat{\mathcal Z}({\bolds\theta}) $ in \eqref
{poissonestimator}. On the other hand, since $n!$ grows faster than the
exponential, this series is always well defined (finite almost surely).

In Fearnhead, Papaspiliopoulos and Roberts\break (\citeyear{Fearnhead2008kk}), the {\em Generalised Poisson Estimator},
originally proposed in \citet{Beskos2006id}, is employed to estimate
transition functions that are similar to \eqref{poissonestimator}.
Here again, this series is finite almost surely with finite
expectation. The choice of which estimator to employ will be problem
dependent and, in situations where it is difficult to guarantee
convergence of the geometric series, this form of estimator may be more
suitable.

In the following section, we discuss the final element of the proposed
methodology: unbiased truncation of the infinite series estimators.

\section{Unbiased Truncation of Infinite Sums: Russian Roulette}\label
{secunbiased}

Two unbiased estimators of nonlinear functions of a normalising
constant have been considered. Both of them rely on the availability of
an unbiased estimator for ${\mathcal Z}(\bolds\theta)$ and a
series representation of the nonlinear function. We now require a
computationally feasible means of obtaining the desired estimator
without explicitly computing the infinite sum and without introducing
any bias into the final estimate. It transpires that there are a number
of ways to randomly truncate the convergent infinite sum ${\mathcal
S}({\bolds\theta}) = \sum\nolimits_{i=0}^{\infty} \phi
_i({\bolds\theta})$ in an unbiased manner. These stem from work
by von Neumann and Ulam in the 1940s; see \citet{papaspil2011} for a
good review of such methods.

\subsection{Single Term Weighted Truncation}
The simplest unbiased truncation method is to define a set of
probabilities and draw an integer index $k$ with probability $q_k$,
then return $\phi_k(\theta)/q_k$ as the estimator. It is easy to see
that the estimator is unbiased as $E\{\hat{\mathcal S}(\bolds\theta)\} = \sum\nolimits_k q_k \phi_k({\bolds\theta})/q_k =
{\mathcal S}(\bolds\theta)$. The definition of the probabilities
should be chosen to minimise the variance of the estimator; see, for
example, \citet{Fearnhead2008kk}. An example could be that each index
is drawn from a Poisson distribution $k \sim\textsf{Poiss}(\lambda)$
with $q_k = \lambda^k\exp(-\lambda)/k!$. However, in the case of a
geometric series where $\phi_k(\bolds\theta)=\phi
^k(\bolds\theta)$, the variance of the estimator will be
infinite with this choice since the combinatorial function $k!$ grows
faster than the exponential. Using the geometric distribution as our
importance distribution, the variance is finite subject to some
conditions on the choice of $p$, the parameter of the geometric
distribution. To see~this,\vspace*{2pt} note that, as $k$ is chosen with probability
$q_k = p^k(1-p)$, the second moment $\mathbb{E}[\hat{S}^2] = \sum_{k
= 0}^{\infty} \hat{S}_k^2 q_k = \sum_{k = 0}^{\infty} \phi_k^2
/p^k (1-p)$ is finite if $\lim_{k\to\infty} |\phi_{k+1}^{2}/p\phi
_{k}^{2} |<1$.

\subsection{Russian Roulette}\label{subsecRR}

An alternative unbiased truncation that exhibits superior performance
in practice is based on a classic Monte Carlo scheme, known as Russian Roulette in the Physics literature (\cite{lux1991monte}, \cite{carter1975particle}). The procedure is based on the
simulation of a finite random variable (stopping time) $\tau$
according to some probabilities $p_n = \mathbb{P}(\tau\geq n)>0$ for
all $n\geq0$ with $p_0 = 1$. Define the weighted partial sums as $S_0
= \phi_0$ and for $k\geq1$
\begin{eqnarray*}
S_k = \phi_0 + \sum_{j = 1}^{k}
\frac{\phi_j}{p_j}.
\end{eqnarray*}

The Russian Roulette estimate of $S$ is $\hat{S} = S_{\tau}$. Russian Roulette implementations in the Physics literature commonly choose a
stopping time of the form
\begin{eqnarray*}
\tau=\inf\{k\geq1: U_k\geq q_k \},
\end{eqnarray*}
where $\{U_j, j\geq1\}$ are i.i.d. $\mathcal{U}(0,1)$, $q_j\in
(0,1]$ and $\hat{S} = S_{\tau-1}$. In this case $p_n=\prod_{j=1}^{n-1} q_j$.

It can be shown that the expectation of the estimate is as required:
\begin{eqnarray*}
\sum_{k=0}^n S_k\PP(\tau=k) &=&
\sum_{k=0}^{n} S_k
(p_k - p_{k+1})
\\
&=& \phi_0 + \sum
_{k=0}^{n-1}S_{k+1}p_{k+1} - \sum
_{k = 0}^{n} S_k p_{k+1}
\\
&=&\sum_{k=0}^n\phi_k
-S_np_{n+1}.
\end{eqnarray*}

By Kronecker's lemma, $\lim_{n\to\infty} p_n S_n=0$, and $|p_{n+1}
S_n|=(p_{n+1}/p_n)p_n|S_n|\leq p_n|S_n|\to0$, as\vspace*{1pt} $n\to\infty$. We
conclude that $\PE[\hat S(\bolds\theta)]=\sum_{k=0}^\infty
S_k\PP(\tau=k) = \sum_{k=0}^\infty\phi_k = S(\bolds\theta
)$. We refer the reader to the \hyperref[app]{Appendix} for a more detailed discussion
relating to the variance of such an estimator and how to design the
sequence of probabilities $(p_n)$.

Based on results presented in the \hyperref[app]{Appendix}, for a geometric series
where $\phi_k(\bolds\theta)=\phi^k(\bolds\theta)$, if
one chooses $q_j=q$, then the variance will be finite provided $q> \phi
(\bolds\theta)^2$. In general, there is a trade-off between the
computing time of the scheme and the variance of the returned estimate.
If the selected $q_j$'s are close to unity, the variance is small, but
the computing time is high. But if $q_j$'s are close to zero, the
computing time is fast, but the variance can be very high, possibly
infinite. In the case of the geometric series, $\phi_k(\bolds\theta)=\phi^k(\bolds\theta)$, choosing $q_j=q=\phi
(\bolds\theta)$ works reasonably well in practice.

As an illustrative example, consider the joint density
%
%
\begin{eqnarray}
p({\bolds\theta}, \nu, {\mathbf u}|{\mathbf y}) &=& \exp\bigl(-\nu\widetilde
{{\mathcal
Z}}({\bolds\theta})\bigr) \nonumber
\\
&&{}\cdot \Biggl(1 + \sum
_{n=1}^{\tau_{\theta}} \frac{\nu^n}{q^n n!}\prod
_{i=1}^n \bigl( \widetilde{{\mathcal Z}}({\bolds\theta}) - \hat{{\mathcal Z}_i}({\bolds\theta}) \bigr) \Biggr)
\\
&&{}\cdot \frac
{f({\mathbf y};{\bolds\theta}) \pi({\bolds\theta}) } {p({\mathbf y})},\nonumber
\end{eqnarray}
where the random variable ${\mathbf u}$ represents the random variables in
the estimates $\hat{{\mathcal Z}_i}({\bolds\theta})$ and the
random variable used in Russian Roulette truncation, and $q^n = \prod
_{l=1}^n q_l$ denotes the probabilities in the Russian Roulette
truncation. If we define a proposal for $\nu'$ as $q(\nu
'|{\bolds\theta}') = \widetilde{\mathcal Z}({\bolds\theta}')
\exp(-\nu' \widetilde{\mathcal Z}({\bolds\theta}'))$ and a
proposal for ${\bolds\theta}'$ as $q({\bolds\theta
}'|{\bolds\theta})$, then the Hastings ratio for a transition
kernel with invariant density $\pi({\bolds\theta}, \nu, {\mathbf
u}|{\mathbf y})$ follows as
%
%
\begin{eqnarray}
&& \frac{ f({\mathbf y};{\bolds\theta}') }{ f({\mathbf y};{\bolds\theta}) } \times\frac{\widetilde{\mathcal Z}({\bolds\theta
})}{\widetilde{\mathcal
Z}({\bolds\theta}')}\times \frac{ \pi({\bolds\theta}')}{ \pi
({\bolds\theta})}
\nonumber\\[-8pt]\\[-8pt]\nonumber
&&\quad{}\cdot
\frac{q({\bolds\theta}|{\bolds\theta}')}{q({\bolds\theta}'|{\bolds\theta})}\times\phi\bigl(\nu, \nu', {\bolds\theta}, {
\bolds\theta}'\bigr),
\end{eqnarray}
where
%
%
\begin{eqnarray}
&& \phi\bigl(\nu, \nu', {\bolds\theta}, {\bolds\theta}'\bigr)
\nonumber\\[-8pt]\\[-8pt]\nonumber
&&\quad = \frac
{1 + \sum_{m=1}^{\tau_{\theta'}} \frac{(\nu')^m}{q^m m!}\prod_{j=1}^m (
\widetilde{{\mathcal Z}}({\bolds\theta}') - \hat
{{\mathcal Z}_j}({\bolds\theta}') )}{
1 + \sum_{n=1}^{\tau_{\theta}} \frac{\nu^n}{q^n n!}\prod_{i=1}^n (
\widetilde{{\mathcal Z}}({\bolds\theta}) - \hat
{{\mathcal Z}_i}({\bolds\theta}) ) }.
\end{eqnarray}

It is interesting to note that $\phi(\nu, \nu', {\bolds\theta
}, {\bolds\theta}') $ acts as a multiplicative\vspace*{1pt} correction for
the Hastings ratio that uses the approximate normalising term
$\widetilde
{\mathcal Z}({\bolds\theta})$ rather than the actual ${\mathcal
Z}({\bolds\theta})$. The required marginal $\pi({\bolds\theta}|{\mathbf y})$ follows due to the unbiased nature of the estimator.

The Russian Roulette methodology has been used in various places in the
literature. \citet{mcleish2011general} and \citet{rhee2012new},
\citet{glynn2014exact} cleverly use the Russian Roulette
estimator to ``debias'' a biased but consistent estimator. We would like
to unbiasedly estimate $X$, for which we have available only a \mbox{sequence}
of approximations, $X_i$, with $\mathrm{E}[X_i] \to\mathrm{E}[X]$ as
$i\to\infty$. Define an infinite series, $S = \mathrm{X_0} + \sum_{n =
1}^{\infty} (X_n - X_{n-1})$; an unbiased estimate of $S$ is an
unbiased estimate of $X$, assuming that the estimates are good enough
to interchange expectation and summation. To achieve a computationally
feasible and unbiased estimator of $X$, the Roulette or Poisson
truncation schemes can then be applied. In the context of our work,
this provides an alternative to the geometric or exponential series
described above, in which only a consistent estimator is required. One
drawback to this debiasing scheme for use in pseudo-marginal MCMC is
that there is no obvious way to reduce the probability of the final
estimate being negative. Russian Roulette is also employed extensively
in the modelling of Neutron Scattering in Nuclear Physics and Ray
Tracing in Computer Graphics (\cite{hendricks1985mcnp}, \cite{carter1975particle}).

Now that the complete Exact-Approximate\break MCMC scheme has been detailed,
the following section illustrates the methodology on some models that
are doubly-intractable, considering the strengths and weaknesses.

\section{Experimental Evaluation}\label{secexp}

\subsection{Ising Lattice Spin Models}

Ising models are examples of doubly-intractable distributions over
which it is challenging to perform inference. They form a prototype for
priors for \mbox{image} segmentation and autologistic models, for example,
\citet{hughes2011autologistic},
\citet{gu2001maximum},
\citet{moller2006efficient}.
Current exact methods such as the Exchange algorithm \citep
{murray2006mcmc} require access to a perfect sampler \citep
{propp1996exact}, which, while feasible for small grids, cannot be
scaled up. A practical alternative is employed in \citet
{caimo2011bayesian}, where an auxiliary MCMC run is used to
approximately simulate from the model. This is inexact and introduces
bias, but it is hoped that the bias has little practical impact. We
compare this approximate scheme with our exact methodology in this section.

For an $N \times N$ grid of spins, $\mathbf{y} = (y_1, \dots,y_{N^2})$, $y\in\{+1,-1\}$, the Ising model has likelihood
%
%
\begin{eqnarray}
&& p(\mathbf{y};\alpha,\beta)
\nonumber\\[-8pt]\\[-8pt]\nonumber
&&\quad = \frac{1}{{\mathcal Z}(\alpha,\beta
)}\exp\Biggl( \alpha\sum
_{i}^{N^2} y_i + \beta\sum
_{i \sim j} y_i y_j \Biggr),
\end{eqnarray}
where $i$ and $j$ index the rows and column of the lattice and the
notation $i \sim j$ denotes summation over nearest neighbours. Periodic
boundary conditions are used in all subsequent computation. The
parameters $\alpha$ and $\beta$ indicate the strength of the external
field and the interactions between neighbours, respectively. The
normalising constant,
%
%
\begin{equation}
\qquad {\mathcal Z}(\alpha,\beta) = \sum_{{\mathcal Y}} \exp\Biggl(
\alpha\sum_{i}^{N^2} y_i +
\beta\sum_{i \sim j} y_i y_j
\Biggr),
\end{equation}
requires summation over all $2^{N^2}$ possible configurations of the
model, which is computationally infeasible even for moderately sized
lattices. This is, in fact, a naive bound as the transfer matrix method
(see, e.g., \cite{mackay2003information}), which has complexity
$N 2^{N}$ that can also be used to compute the partition function.

%
\begin{table*}[t]
\tabcolsep=0pt
\caption{Monte Carlo estimates of the mean and standard deviation of
the posterior distribution $p(\beta|\mathbf{y})$ using the five
algorithms described. The debiasing series estimates have been
corrected for negative estimates. The exact chain was run for 100,000
iterations and then the second half of samples used to achieve a ``gold
standard'' estimate. An estimate of the effective sample size (ESS) is
also shown based on 10,000 MCMC samples}\label{IsingTable}
\begin{tabular*}{\tablewidth}{@{\extracolsep{\fill}}@{}lcc c c c@{}}
\hline
 & \textbf{Roulette} & \textbf{Poisson} & \textbf{Exchange (approx)} & \textbf{Exchange (exact)} & \textbf{Exact}\\
\hline
Mean & 0.2004 & 0.2005  &  0.2013  &  0.2010  & 0.2008 \\
Standard deviation &  0.0625 &  0.0626  &  0.0626  & 0.0626  & 0.0625 \\
ESS &  2538  &  2660  &  1727  &  1732  &  3058  \\
\hline
\end{tabular*}
\end{table*}

%
\begin{figure*}[b]

\includegraphics{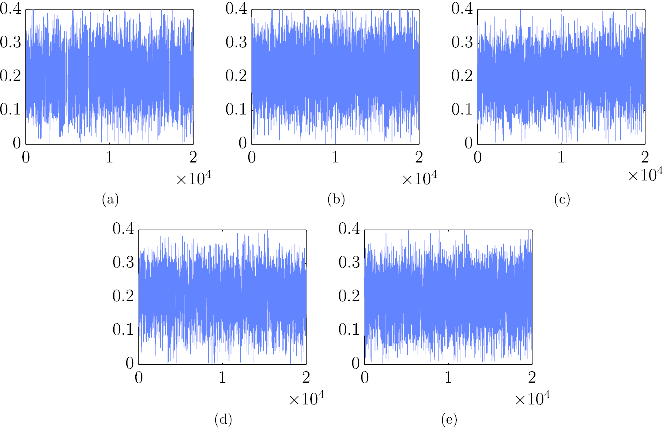}

\caption{Traces of samples using the debiasing infinite series with
\textup{(a)} Russian Roulette,
\textup{(b)} Poisson truncation, and
\textup{(c)} the approximate Exchange algorithm,
\textup{(d)} the Exchange algorithm using perfect samples and
\textup{(e)} an MCMC chain with the partition function calculated using the
matrix transfer method. Note in \textup{(a)} and \textup{(b)} the samples are not drawn
from the posterior distribution, $p(\beta|{\mathbf y})$, but from the
(normalised) absolute value of the estimated density.}\label{figIsingTraces}
\end{figure*}

%
\begin{figure*}[t]

\includegraphics{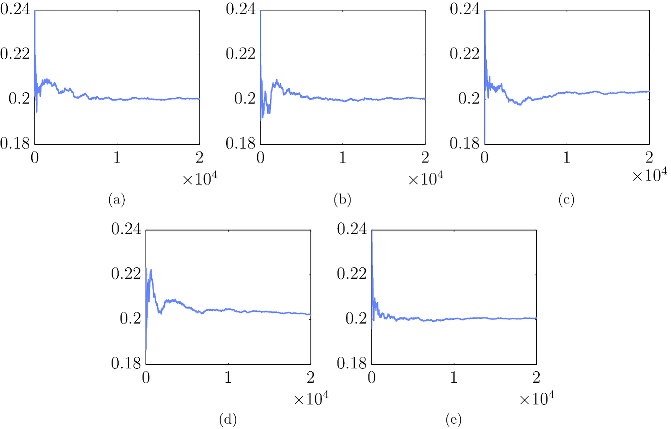}

\caption{Plots of the running mean for the posterior distribution
$p(\beta|\mathbf{y})$ of a $10 \times 10$ Ising model using three
methods: \textup{(a)} debiasing series with roulette truncation,
\textup{(b)} debiasing series with Poisson truncation,
\textup{(c)} approximate Exchange,
\textup{(d)} the Exchange algorithm using perfect samples and
\textup{(e)} an MCMC chain with the
partition function calculated using the matrix transfer method.}\vspace*{-5pt}
\label{figIsingRunMean}
\end{figure*}

Experiments were carried out on a small $10 \times 10$ lattice to
enable a detailed comparison of the various algorithms. A configuration
was simulated using a perfect sampler with parameters set at $\alpha=
0$ and $\beta= 0.2$. Inference was carried out over the posterior
distribution $p(\beta|\mathbf{y})$ ($\alpha= 0$ was fixed). A
standard Metropolis--Hastings sampling scheme was used to sample the
posterior, with a normal proposal distribution centred at the current
value and acceptance rates tuned to around 40\%. A uniform prior on
$[0,1]$ was set over $\beta$. As no tight upper bound is available on
the normalising term ${\mathcal{Z}}(\bolds\theta)$, the
debiasing series construction of \citet{mcleish2011general} and \citet
{glynn2014exact}, described at the end of Section~\ref{subsecRR}, was
used to construct an unbiased estimate of the likelihood. The sequence
of biased but consistent estimates of $1/{\mathcal{Z}}(\bolds\theta)$ was produced by taking the reciprocal of unbiased SMC
estimates of ${\mathcal Z}(\bolds\theta)$ with an increasing
number of importance samples and temperatures [see \citet
{delmoral2006} for a good introduction to SMC]. SMC proceeds by
defining a high-dimensional importance density which is sampled
sequentially, and in this case we used a geometric schedule (\cite{gelman1998simulating},
\cite{Neal98annealedimportance}) to define the sequence
of distributions
\begin{eqnarray*}
p({\mathbf y}|{\bolds\theta})_n \propto p({\mathbf y}|{\bolds\theta})^{\phi_n} U({\mathbf y})^{1 - \phi_n},
\end{eqnarray*}
with $0\leq\phi_1< \cdots<\phi_p = 1$ and $U(\cdot)$ a uniform
distribution over all the grids in $\mathcal{Y}$. A Gibbs transition
kernel, in which one spin was randomly selected and updated according
to its conditional distribution, was used to sequentially sample the
high-dimensional space. The initial estimate, $1/{\mathcal
{Z}}({\bolds\theta})_0$, used 100 temperatures and 100
importance samples; the $i$th estimate used $100\times2^i$
temperatures and importance samples.

The infinite series was truncated unbiasedly using both Poisson
truncation and Russian Roulette. For comparison, the posterior
distribution was also sampled using the Exchange algorithm, the
approximate form of the Exchange algorithm \citep{caimo2011bayesian}
with an auxiliary Gibbs sampler run for 50,000 steps at each iteration,
and an ``exact'' MCMC chain using the matrix transfer method to calculate
the partition function at each iteration. All chains were run for
20,000 iterations and the second half of the samples used for Monte
Carlo estimates.

The exact posterior mean and standard deviation are not available for
comparison, but the estimates from the five methods agree well
(Table~\ref{IsingTable}). The traces in Figure~\ref{figIsingTraces}
show that the algorithms mix well and Figures~\ref{figIsingRunMean}
and \ref{figIsingRunStd} show that the estimates of the mean and
standard deviation agree well. Estimates of the Effective sample size
(ESS) are also included in Table~\ref{IsingTable}, which give an idea
of how many independent samples are obtained from each method per
10,000 samples.

%
\begin{figure*}[b]

\includegraphics{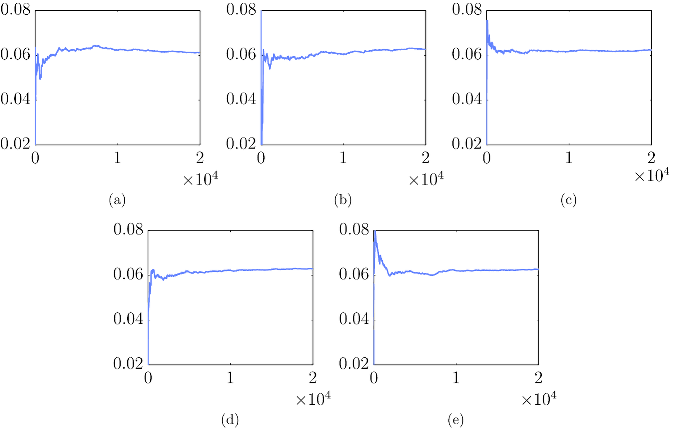}

\caption{Plots of the running standard deviation for the posterior
distribution $p(\beta|\mathbf{y})$ of a $10 \times 10$ Ising model
using three methods: \textup{(a)}~debiasing series with roulette truncation,
\textup{(b)} debiasing series with Poisson truncation,
\textup{(c)} approximate Exchange,
\textup{(d)} the Exchange algorithm using perfect samples and
\textup{(e)} an MCMC chain with
the partition function calculated using the matrix transfer method.}\vspace*{-5pt}\label{figIsingRunStd}
\end{figure*}
%
%
\begin{figure*}[t]

\includegraphics{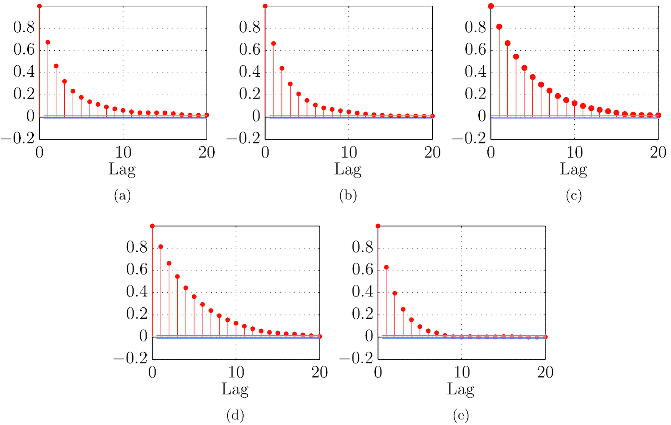}

\caption{Autocorrelation plots for samples drawn from the posterior
distribution $p(\beta|\mathbf{y})$ of a $10 \times 10$ Ising model
using five methods:
\textup{(a)}~debiasing series with roulette truncation,
\textup{(b)}~debiasing series with Poisson truncation,
\textup{(c)}~approximate Exchange,
\textup{(d)}~the Exchange algorithm using perfect samples and
\textup{(e)} an MCMC chain with the partition function calculated using the matrix transfer method.}\label{figIsingACF}
\end{figure*}

Approximately 5\% of estimates were negative when using roulette
truncation and 10\% when using Poisson truncation; however, using the
correction in equation~\eqref{eqMCsign}, expectations with respect to
the posterior still converge to the correct values. If we had opted to
implement the geometric series construction of Section~\ref{secgeo} in
order to reduce the number of negative estimates, we have available
only a naive upper bound for the partition function corresponding to
setting all spins to $+$1. This bound is very loose and therefore
impractical, as the series converges very slowly. Hence, the
availability of a method to deal with negative estimates frees us from
atrocious upper bounds that would explode the asymptotic variance of
the chains.

The autocorrelation functions (Figure~\ref{figIsingACF}) and the
effective sample size (Table~\ref{IsingTable}) of both Russian Roulette and Poisson truncation outperform the approximate and exact
Exchange algorithm in this example and are comparable to the exact
implementation; of course, it is possible to improve the performance of
our algorithm by using more computation, whereas this is not possible
with the Exchange algorithm. It should be noted that the Exchange
algorithm in this guise is less computationally intensive. However, it
becomes impossible to perfectly sample as the size of the lattice
increases, whereas our algorithm can still be implemented, albeit with\vadjust{\goodbreak}
considerable computational expense. Note that even at this small
lattice size, the approximate version of Exchange looks noticeably less stable.

We have further experimented on larger lattices, for example, we have
used both the Exchange algorithm and our methodology to carry out
inference over a $40\times40$ grid. At this size it is not possible to
use the matrix transfer method to run an ``exact'' chain. Sequential
Monte Carlo (SMC) was used to estimate ${\mathcal Z }_i(\theta)$ at
each iteration in the Roulette implementation. The estimates of the
means and the standard deviations from both methods again agreed well
(to the third or fourth decimal place). We have also carried out
inference over a $60\times60$ grid; however, it is no longer possible
to perfectly sample at this size, particularly for parameter values
near the critical value.

\subsection{The Fisher--Bingham Distribution on a Sphere}

The Fisher--Bingham distribution \citep{kent1982} is constructed by
constraining a multivariate Gaussian vector to lie on the surface of a
$d$-dimensional unit radius sphere, $S_d$. Its form is
\begin{eqnarray*}
p(\mathbf{y}|\mathbf{A}) \propto\exp\bigl\{ \mathbf{y}'\mathbf{A}
\mathbf{y} \bigr\},
\end{eqnarray*}
where $\mathbf{A}$ is a $d\times d$ symmetric matrix and, from here on,
we take $d = 3$. After rotation to principle axes, $\mathbf{A}$ is
diagonal and so the probability density can be written as
\begin{eqnarray*}
p(\mathbf{y}|\bolds\lambda) \propto\exp\Biggl\{ \sum
_{i =
1}^{d}\lambda_i y_{i}^2
\Biggr\}.
\end{eqnarray*}

This is invariant under addition of a constant factor to each $\lambda
_i$, so for identifiability we take $0 = \lambda_1 \geq\lambda_2
\geq\lambda_3$. The normalising constant, ${\mathcal Z}(\bolds\lambda)$, is given by
\begin{eqnarray*}
{\mathcal Z}(\bolds\lambda) = \int_{\mathcal S} \exp\Biggl\{
\sum_{i = 1}^{d}\lambda_i
y_{i}^2\Biggr\} \mu(d\mathbf{y}),
\end{eqnarray*}
where $\mu(d\mathbf{y})$ represents the Hausdorff measure on the
surface of a sphere. Very few papers have presented Bayesian posterior
inference over the distribution due to the intractable nature of
${\mathcal Z}(\bolds\lambda) $. However, in a recent paper,
Walker uses an auxiliary variable method \citep{walker2011posterior}
outlined in the \hyperref[sec1]{Introduction} to sample from $p(\bolds\lambda
|\mathbf{y})$. We can apply our version of the Exact-Approximate
methodology, as we can use importance sampling to get unbiased
estimates of the normalising constant.

Twenty data points were simulated using an MCMC sampler with
$\bolds\lambda= [0,0,-2]$ and posterior inference was carried
out by drawing samples from $p(\lambda_3|\mathbf{y})$, that is, it
was assumed $\lambda_1 = \lambda_2 = 0$. Our Exact-Approximate
methodology was applied using the geometric construction with Russian Roulette truncation. A uniform distribution on the surface of a sphere
was used to draw importance samples for the estimates of ${\mathcal
Z}(\bolds\lambda)$. The proposal distribution for the parameters
was Gaussian with mean given by the current value, a uniform prior on
$[-5,0]$ was set over $\lambda_3$, and the chain was run for 20,000
iterations. Walker's auxiliary variable technique was also implemented
for comparison using the same prior but with the chain run for 200,000
samples and then the chain thinned by taking every 10th sample to
reduce strong autocorrelations between samples. In each case the final
10,000 samples were then used for Monte Carlo estimates.

%
\begin{table}[b]
\tabcolsep=0pt
\caption{Estimates of the posterior mean and standard deviation of the
posterior distribution using roulette and Walker's method for the
Fisher--Bingham distribution. An estimate of the effective sample size
(ESS) is also shown based on 10,000 MCMC samples}\label{tableBingham}
\begin{tabular*}{\tablewidth}{@{\extracolsep{\fill}}@{}lcc@{}}
\hline
 & \textbf{Roulette} & \textbf{Walker}\\
\hline
Estimate of mean & $-$2.377 & $-$2.334 \\
Estimate of standard deviation & 1.0622 & 1.024\\
ESS & 1356 & 212\\
\hline
\end{tabular*}
\end{table}

%
\begin{figure*}[t]

\includegraphics{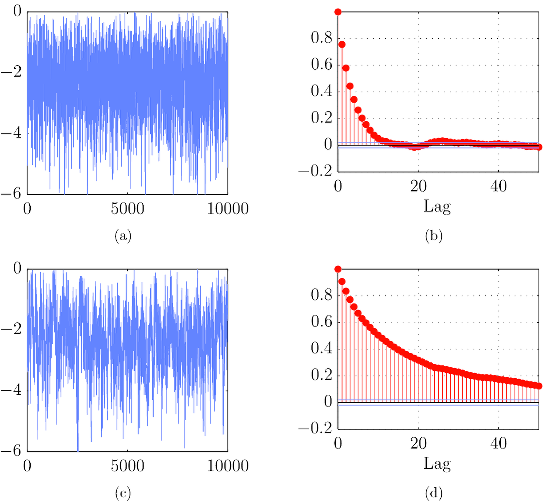}

\caption{Sample traces and autocorrelation plots for the
Fisher--Bingham distribution for the geometric tilting with Russian Roulette truncation [\textup{(a)} and \textup{(b)}] and Walker's auxiliary variable
method [\textup{(c)} and \textup{(d)}].}
\label{figBinghamACF}
\end{figure*}

In the Russian Roulette method, six negative estimates were observed in
10,000 estimates. The estimates of the mean and standard deviation of
the posterior agree well (Table~\ref{tableBingham}), however, the
effective \mbox{sample} size and autocorrelation of the Russian Roulette
method are superior as seen in Figure~\ref{figBinghamACF}. Note that
it is also possible to get an upper bound on the importance sampling
estimates for the Fisher--Bingham distribution. If we change our
identifiability constraint to be $0 = \lambda_1 \leq\lambda_2 \leq
\lambda_3$, we now have a convex sum in the exponent which can be
maximised by giving unity weight to the largest $\lambda$, that is,
$\sum_{i=1}^{d} \lambda_i y_i^2 < \lambda_{\max}$. We can compute
$\widetilde{\mathcal Z}(\bolds\theta)$ as $1/N \sum_n \exp
(\lambda_{\max})/g(y_n)$, where $g(y)$ is the importance distribution.

\section{The Limits of Exact Approximate Methods: The Ozone Data Set}
\label{secOzone}

In the previous sections of this paper we have combined various ideas
from both the Statistics and Physics literature to suggest a
pseudo-marginal MCMC scheme for doubly-intractable distributions.
Further, we have shown experimentally that this method can be
implemented in a range of Bayesian inference problems. We now turn our
attention to a case where this methodology runs into difficulty.

It is tempting to think that the method could be used to tackle very
large problems in which, for example, the likelihood requires the
computation of the determinant of a very large matrix. For many
problems the matrix in question is so large that it is not possible to
compute its Cholesky decomposition, and hence not possible to compute
the determinant. As methods are available to produce unbiased estimates
of the log determinant (\cite{bai1996some}, \cite{aune2014parameter}), the idea
would be to write the determinant, ${\mathcal{D}(\theta)}$, as
${\mathcal{D}(\theta)} = \exp(\log{\mathcal{D}(\theta)})$ and
then use the Maclaurin series expansion of the exponential function in
which each term can be estimated unbiasedly. The infinite series can
then be unbiasedly truncated using Russian Roulette methods and the
overall estimate plugged into a pseudo-marginal MCMC scheme.
Theoretically, this is an exact scheme to sample from the posterior of
such a model; however, upon closer inspection, there are several
practical difficulties associated with such an approach, namely, that
it is not possible to realise a fully unbiased estimate of the log
determinant. For exposition purposes, we now describe a specific
example of a posterior for which it is difficult if not impossible to
realise an unbiased estimate of the likelihood. In particular, we
consider the total column ozone data set that has been used many times
in the literature to test algorithms for large spatial problems
(\cite{Cressie2008Fixed}, \cite{Jun2008}, \cite{bolin2009}, \cite{aune2014parameter}, \cite{eidsvik2013estimation}).
This data set is representative of the types of problems for which
exact Markov chain Monte Carlo is considered infeasible. While large,
this data set is still of a size to run exact inference on and it
serves as an interesting example of a problem in which the methods
discussed in this paper break down. Full details and an implementation
can be found at \surl{http://www.ucl.ac.uk/roulette}.

We begin by describing the model and inference problem, and then
suggest reasons why an application of the pseudo-marginal approach may
run into difficulties. We close by describing results we were able to
obtain and giving pointers to alternative approaches for similar problems.

\subsection{The Model}

The data, which is shown in Figure~\ref{figozonedata}, consists of $N
={}$173,405 ozone measurements gathered by a satellite with a passive
sensor that measures back-scattered light \citep{Cressie2008Fixed}.
While a full analysis of this data set would require careful modelling
of both the observation process and the uncertainty of the field, for
the sake of simplicity, we will focus on fitting a stationary model.

%
\begin{figure*}[t]

\includegraphics{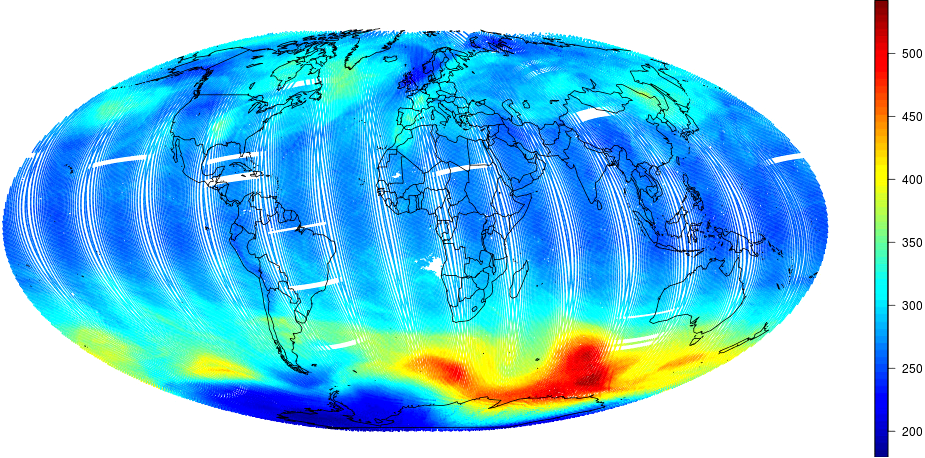}

\caption{The total ozone column data set, aligned with a map of the world.}
\label{figozonedata}
\end{figure*}

We model the data using the following three-stage hierarchical model:
%
%
\begin{eqnarray}
\label{eqnhierarchicalmodel} y_i | \mathbf{x}, \kappa, \tau&\sim&\mathcal
{N}\bigl(\mathbf{Ax}, \tau^{-1} \mathbf{I}\bigr),\nonumber
\\
\mathbf{x} | {\kappa} &\sim&\mathcal{N}\bigl(\mathbf{0}, \mathbf{Q}(
\kappa)^{-1}\bigr),
\\
\kappa&\sim&\log_2\mathcal{N}(0,100),\quad\tau\sim
\log_2\mathcal{N}(0,100),\hspace*{-15pt}\nonumber
\end{eqnarray}
where $\mathbf{Q}(\kappa)$ is the precision matrix of a Mat\'{e}rn
stochastic partial differential equation (SPDE) model defined on a
fixed triangulation of the globe and $\mathbf{A}$ is a matrix that
evaluates the piecewise linear basis functions in such a way that
$x(s_i) = [\mathbf{Ax}]_i$. The parameter $\kappa$ controls the range
over which the correlation between two values of the field is
essentially zero \citep{Lindgren2011}. The precision matrix $\mathbf
{Q}(\kappa)$ is sparse, which allows both for low-memory storage and
for fast matrix-vector products.

In this paper, the triangulation over which the SPDE model is defined
has $n ={}$196,002 vertices that are spaced regularly around the globe,
allowing piecewise linear spatial prediction. As the observation
process is Gaussian, a straightforward calculation shows that
%
%
\begin{eqnarray}
\mathbf{x} | \mathbf{y},\kappa,\tau
&\sim& N \bigl( \tau\bigl(\mathbf
{Q}(\kappa) +
\tau\mathbf{A}^T\mathbf{A}\bigr)^{-1}\mathbf{A}^T
\mathbf{y},
\nonumber\\[-8pt]\\[-8pt]\nonumber
&& \bigl(\mathbf{Q}(\kappa) + \tau\mathbf{A}^T\mathbf{A}
\bigr)^{-1} \bigr).
\end{eqnarray}

Given the hierarchical model in \eqref{eqnhierarchicalmodel}, we are
interested in the parameters $\kappa$ and $\tau$ only. To this end,
we sample their joint posterior distribution given the observations
$\mathbf{y}$, marginalised over the latent field $\mathbf{x}$, which gives
$
\pi(\kappa, \tau|\mathbf{y})\propto\pi(\mathbf{y}|\kappa, \tau
)\pi(\kappa)\pi(\tau)$.
To compute this expression, we need the marginal likelihood $\pi
(\mathbf{y}|\kappa, \tau)$, which in this case is available
analytically since $\pi(\mathbf{y}|\mathbf{x},\tau)$ and $\pi(\mathbf
{x}|\kappa)$ are both Gaussian,
%
%
\begin{eqnarray}\label{eqnozonelikelihood}
\pi(\mathbf{y}|\kappa, \tau)&=&\int\pi(\mathbf{y}|\mathbf{x}, \kappa,
\tau)\pi(
\mathbf{x}|\kappa)\,d\mathbf{x}
\nonumber\\[-8pt]\\[-8pt]\nonumber
& =&\mathcal{N}\bigl(\mathbf{0}, \tau^{-1} \mathbf{I}
+ \mathbf{A}\mathbf{Q}(\kappa)^{-1}\mathbf{A}^T\bigr).
\end{eqnarray}

Using the matrix inversion lemma to avoid storing nonsparse matrices,
the log marginal likelihood is
\begin{eqnarray}
\label{eqnozoneloglikelihood} 2\mathcal{L}(\theta)&:=&2\log\pi(\mathbf
{y}|\kappa, \tau)\nonumber
\\
&=&
C + \log\bigl(\det\bigl(\mathbf{Q}(\kappa)\bigr)\bigr)+N\log(\tau)
\nonumber\\[-8pt]\\[-8pt]\nonumber
&&{} -\log\bigl(
\det\bigl(\mathbf{Q}(\kappa)+\tau\mathbf{A}^T\mathbf{A}\bigr)\bigr)\nonumber
\\
&&{} - \tau\mathbf{y}^T\mathbf{y}+\tau^2
\mathbf{y}^T\mathbf{A}\bigl(\mathbf{Q}(\kappa)+\tau
\mathbf{A}^T\mathbf{A}\bigr)^{-1}\mathbf{A}^T
\mathbf{y}.\nonumber
\end{eqnarray}

\subsection{Likelihood Estimation and Russian Roulette}

In order to apply a pseudo-marginal MCMC scheme, we require an unbiased
estimate of \eqref{eqnozonelikelihood}, for which we first need to
compute unbiased estimates of the log-likelihood \eqref
{eqnozoneloglikelihood}. Those are then plugged into a Russian Roulette
truncated Maclaurin expansion of the exponential function, $\exp
(\mathcal{L}(\theta))=\sum_{n=0}^{\infty}\frac{\mathcal{L}(\theta
)^n}{n!}$ [after replacing each $\mathcal{L}(\theta)$ with an
unbiased estimate], to obtain the required unbiased estimate of the
overall Gaussian likelihood \eqref{eqnozonelikelihood}.

To construct an unbiased estimator of \eqref{eqnozoneloglikelihood},
the main challenge is to estimate $\log(\det(\mathbf{Q}))$. We note that
%
%
\begin{equation}
\qquad \log\bigl(\det(\mathbf{Q})\bigr) =\tr\bigl(\log(\mathbf{Q})\bigr)=
\mathbb{E}_{\mathbf{z}}\bigl(\mathbf{z}^T\log(\mathbf{Q})
\mathbf{z}\bigr), \label{eqnlogdetmontecarlo}
\end{equation}
where $\mathbf{z}$ is a vector of i.i.d. centred, unit variance random
variables \citep{bai1996some}.
Therefore, an unbiased estimator of the log-determi\-nant can be
constructed through Monte Carlo estimates of the expectation with
respect to the distribution of $\mathbf z$. \citet{aune2014parameter} used
rational approximations and Krylov subspace methods to compute each
$\log(\mathbf{Q}) \mathbf{z}$ in \eqref{eqnlogdetmontecarlo} to machine
precision, and they introduced a graph colouring method that massively
reduces the variance in the Monte Carlo estimator. This approach is
both massively parallel and requires a low-memory overhead, as only
$\mathcal{O}(1)$ large vectors need to be stored on each processor.

However, as already mentioned, several issues are foreseeable when
applying the pseudo-marginal MCMC scheme to the posterior.
We emphasize two main points here:


%
\begin{longlist}[2.]
\item[1.]\emph{Numerical linear algebra}: In order to compute estimates
of the log-likelihood \eqref{eqnozoneloglikelihood}, we need to solve
a number of sparse linear systems. More precisely, we apply the
methodology of \citet{aune2014parameter}, which reduces computing each
log-determinant to solving a family of shifted linear equations for
each of the $\log(\mathbf{Q}) \mathbf{z}$ in \eqref{eqnlogdetmontecarlo}.
In addition, we need to solve the matrix inversions in \eqref
{eqnozoneloglikelihood}. Note that each sparse linear system is
independent and may be solved on its own separate computing node.
Speed of convergence for solving these sparse linear systems largely
depends on the condition number of the underlying matrix---the ratio
of the largest and the smallest eigenvalues. In this example, the
smallest eigenvalue of $\mathbf{Q}(\kappa)$ is arbitrarily close to zero,
which catastrophically affects convergence of the methods described in
\citet{aune2014parameter}. We can partially overcome these practical
issues by regularising the matrix's smallest eigenvalue via adding a
small number to the diagonal, shrinking the condition number using
preconditioning matrices for the conjugate gradient, and setting a
large iteration limit for the linear solvers. These convergence
problems are typical when considering spatial models, as the
eigenvalues of the continuous precision operator are unbounded. This
suggests a fundamental limitation to exact-approximate methods for
these models: it is impossible to attain full floating point precision
when solving these linear systems, and hence the resulting Markov chain
cannot exactly target the marginal posterior density $\pi(\kappa,
\tau|\mathbf{y})$. 

\item[2.] {\emph{Scaling}: A big challenge for practically implementing
the Russian Roulette step is the large amount of variability in the
estimator for \eqref{eqnozoneloglikelihood}, which is amplified by
Russian Roulette. Denote by $\widehat{{\mathcal{L}(\theta)}} - U$
the unbiased estimator of the log-likelihood in equation
(\ref{eqnozoneloglikelihood}), shifted towards a lower bound (see below)
to reduce its absolute value. When the variance of the log-determinant
estimator is large, the exponential series expansion will converge
slowly and we will need to keep a large number of terms in order to
keep the variance of the overall estimate low. We can get around this
by borrowing the idea of ``scaling-and-squaring'' from numerical
analysis \citep{Golub1996}.

%
\begin{figure*}[t]

\includegraphics{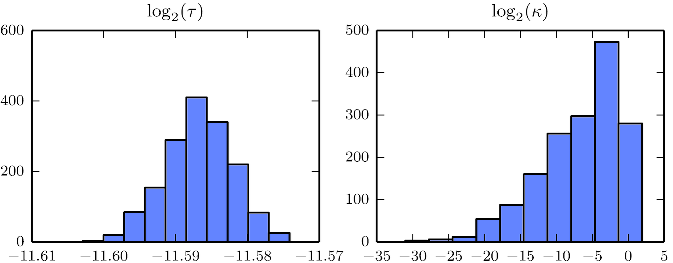}

\caption{Histograms of the marginals $p(\log_2(\tau)|\mathbf{y})$
(left) and $p(\log_2(\kappa)|\mathbf{y})$ (right).}
\label{figposteriors}
\end{figure*}

We find an integer $E\in\mathbb{N}$ with $E\approx\mathcal
{L}(\theta)-U$, for example, by averaging over a number of estimates.
We then write
%
%
\begin{eqnarray}
\qquad\exp\bigl(\mathcal{L}(\theta)-U\bigr)= \biggl(\exp\biggl(\frac{\mathcal
{L}(\theta)-U}{E}
\biggr) \biggr)^E. \label{eqnRRrescalling}
\end{eqnarray}
In order to compute an unbiased estimate for this expression, we need
to multiply $E$ unbiased estimates of $\exp\{(\mathcal{L}(\theta
)-U)/E\}$, each of which we can obtain using Russian Roulette. This is
now an easier problem since $(\mathcal{L}(\theta)-U)/E\approx1$ is
close to one. Therefore, the exponential series converges rapidly so
that we only need a few estimates for $(\mathcal{L}(\theta)-U)/E$ in
order to obtain one estimate of $\exp\{(\mathcal{L}(\theta)-U)/E\}$.
The fact that a lower bound for $\mathcal{L}(\theta)$ is unavailable
compromises unbiasedness of the estimator. In practice, this, however,
was not measurable and drastically improved run-time. }

\end{longlist}

\subsection{Results and Remarks on Approximate Schemes}

As this model is sufficiently small to (with some effort) perform exact
inference, we began by finding the exact marginal posterior $\pi
(\kappa,\tau\mid\mathbf{y})$, which is shown in Figure~\ref
{figposteriors}. The resulting density is relatively simple, which
suggests that an appropriately scaled random walk Metropolis algorithm
is sufficient for exploring it. As expected, in contrast to the other
cases examined in this paper, we found that the Russian Roulette random
walk Metropolis chain failed to converge for this problem: the chain
exhibited catastrophic sticking and therefore extremely high
autocorrelation. This is likely due to a combination of (a) our
approximations in the log-likelihood estimator (due to the ill-posed
linear systems), (b) the variation of the log-determinant estimator due
to slow convergence of the linear solvers, and (c) the bias due to
introducing the above scaling trick to the Russian Roulette scheme.


The fact that it is infeasible to realise a genuinely unbiased estimate
of the normalising term for a model of this nature and size may mean
that the Russian Roulette framework (and perhaps the entire concept of
exact-approximate methods) is not the right approach for this type of
model. We note that it has been shown previously in the literature that
there are limitations to the efficiency of the pseudo-marginal scheme.
For example, \citet{sherlock2013efficiency} established results on
optimal scaling and acceptance rate which indicate compromised
efficiency when using the scheme.
We close with the remark that compromising the ambitious goal of
performing full and exact Bayesian inference on this problem might be a
reasonable approach for practitioners who are interested in using
models of the above type for solving large-scale problems. Recently,
there has been an increased interest in approximate Markov transition
kernels that allow such trade-off between computing time and introduced
bias. Most of those methods are based on subsampling available
observations in the Big Data case (\cite{bardenet2014towards}, \cite{korattikara2014austerity}, \cite{welling2011bayesian}), and
are therefore not available for the described ozone model, where we aim
to do inference for a \emph{single} observation. Similarly, the
Exchange algorithm \citep{murray2006mcmc} is unavailable due to
sampling from the likelihood being infeasible.

Using an approximate Markov transition kernel, induced from any
approximation to the likelihood, leads to a chain whose invariant
distribution is not equal to the true marginal posterior. Recently,
\citet{alquier2014noisy} reviewed and analysed many cases of such
approximate MCMC algorithms.
A weak form of convergence is given by \citet{alquier2014noisy} (Theorem~2.1), which states that a Markov chain induced by an approximate
transition kernel which approaches its exact version in the limit has
an invariant distribution and this converges to the desired
distribution as the kernel converges to the exact kernel under certain
conditions. Theoretically, it is possible to apply this \mbox{approach} to the
ozone example, as we can get a biased estimator for the log-likelihood
via avoiding the Russian Roulette, and this kernel becomes exact when
we use a very large number of iterations in the linear solvers.
However, the slow convergence of the solvers remains a problem and in
fact leads to such large variation in the log-likelihood estimate that
again the chain catastrophically sticks. It would seem, for the moment,
that further approximation is required in order to carry out Bayesian
inference. For example, in \citet{shaby2014open}, it is suggested that
a function other than the likelihood, for example, a composite
likelihood, can be used in an MCMC scheme to obtain samples from a
``quasi-posterior'' which can then be rotated and scaled to give
asymptotically valid estimates.


\section{Discussion and Conclusion}\label{secdisc}

The capability to perform pseudo-marginal MCMC on a wide class of
doubly-intractable distributions has been reviewed and established in
this paper. The methods described are not reliant on the ability to
simulate exactly from the underlying model, only on the availability of
unbiased estimates of the inverse of a normalising term,
which makes them applicable to a wider range of problems than has been
the case to date.

The development of this method, which returns an unbiased estimate of
the target distribution, is based on the stochastic truncation of a
series expansion of the desired density. If the intractable likelihood
is composed of a bounded function and nonanalytic normalising term,
then the proposed methodology can proceed to full MCMC with no further
restriction. However, in the more general case, where an unbounded
function forms the likelihood, then the almost sure guarantee of \emph
{positive} unbiased estimates is lost. The potential bias induced due
to this lack of strict positivity is dealt with by adopting a scheme
employed in the QCD literature where an absolute measure target
distribution is used in the MCMC and the final Monte Carlo estimate is
``sign corrected'' to ensure that expectations with respect to the
posterior are preserved. The inflation of the Monte Carlo error in such
estimates is a function of the severity of the sign problem and this
has been characterised in our work. What has been observed in the
experimental evaluation is that, for the examples considered, the sign
problem is not such a practical issue when the variance of the
estimates of the normalising terms is well controlled and this has been
achieved by employing \mbox{Sequential} Monte Carlo Sampling in some of the
examples. Hence, one of the areas for future work is efficient
estimators of the normalising term, which can be either unbiased or
merely consistent. Indeed, for the total column ozone data set, it is
not possible at present to realise a completely unbiased estimate of
$\log(\det(\mathbf{Q}))$, as is required for the pseudo-marginal
methodology. The inherent computational parallelism of the methodology,
due to it only requiring a number of independent estimates of
normalising constants, indicates that it should be possible to
implement this form of inference on larger models than currently
possible, however it is also clear that there is some limit to how much
the method can be scaled up. For the time being, approximate methods
described in Section~\ref{secreview} can be used for very large-scale
models, for example, analytic approximations to the posterior \citep
{rue2009approximate} or ABC \citep{moores2014pre} could be used.

It has been shown \citep{jacob2013non} that it is not possible to
realise strictly positive estimates of the target distribution using
the series expansions described in this paper, unless the estimates of
the normalising term lie in a bounded interval. In its most \mbox{general}
representation it is recognised that the sign problem is NP-hard,
implying that a practical and elegant solution may remain elusive for
some time to come. However, other ideas from the literature, such as
the absolute measure approach \citep{lin2000noisy}, can be used to
tackle the sign problem. The methodology described in this paper
provides a general scheme with which Exact-Approximate MCMC for
Bayesian inference can be deployed on a large class of statistical
models. This opens up further opportunities in statistical science and
the related areas of science and engineering that are dependent on
simulation-based inference schemes.

\begin{appendix}\label{app}
\section{Russian Roulette}

Consider approximating the sum $S=\sum_{k\geq0} \alpha_k$ assumed
finite. Let $\tau$ denote a finite random time taking positive
integer values such that $p_n\eqdef\PP(\tau\geq n)>0$ for all $n\geq
0$. The fact that $\tau$ is finite almost surely means that
%
%
\begin{equation}
\label{cond1} \PP(\tau=\infty)=\lim_{n\to\infty} p_n=0.
\end{equation}

We consider the weighted partial sums $S_0=\alpha_0$, and for $k\geq1$,
\[
S_k=\alpha_0 + \sum_{j=1}^k
\frac{\alpha_j}{p_j}.
\]

For completeness, we set $S_\infty=\infty$. The Russian Roulette
random truncation approximation of $S$ is
\[
\hat S=S_\tau.
\]

If $\tau$ can be easily simulated and the probabilities $p_n$ are
available then $\hat S$ can be computed. The next result states that
$\hat S$ is an unbiased estimator of $S$.
%
\begin{proposition}\label{rouletteprop1}
The random variable $\hat S$ has finite
expectation, and $\PE(\hat S)=S$.
\end{proposition}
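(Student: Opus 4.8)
The plan is to recast the random partial sum $\hat{S}=S_\tau$ so that the stopping time enters only through the events $\{\tau\geq j\}$, whose probabilities are exactly the weights $p_j$ appearing in the denominators. Starting from the definition, for every outcome one may write
\[
\hat{S}=S_\tau=\alpha_0+\sum_{j=1}^{\tau}\frac{\alpha_j}{p_j}=\alpha_0+\sum_{j=1}^{\infty}\frac{\alpha_j}{p_j}\,\mathbbm{1}(\tau\geq j),
\]
since $\mathbbm{1}(\tau\geq j)$ truncates the series at the realised value of $\tau$. This representation is the crux: it converts the random number of summands into a deterministic series of random terms, each of which has a tractable expectation because $\PE\big(\mathbbm{1}(\tau\geq j)\big)=\PP(\tau\geq j)=p_j$.

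First I would establish finite expectation. Applying Tonelli's theorem to the nonnegative terms $|\alpha_j|\,p_j^{-1}\,\mathbbm{1}(\tau\geq j)$ gives
\[
\PE|\hat{S}|\leq |\alpha_0|+\sum_{j=1}^{\infty}\frac{|\alpha_j|}{p_j}\,\PE\big(\mathbbm{1}(\tau\geq j)\big)=|\alpha_0|+\sum_{j=1}^{\infty}\frac{|\alpha_j|}{p_j}\,p_j=\sum_{j=0}^{\infty}|\alpha_j|,
\]
so that $\hat{S}$ is integrable as soon as $\sum_j|\alpha_j|<\infty$. The same bound certifies the absolute integrability required to invoke Fubini's theorem in the next step, so the two halves of the statement are settled together. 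With integrability in hand I would interchange expectation and summation and use the cancellation $p_j^{-1}\,\PE(\mathbbm{1}(\tau\geq j))=1$:
\[
\PE(\hat{S})=\alpha_0+\sum_{j=1}^{\infty}\frac{\alpha_j}{p_j}\,\PP(\tau\geq j)=\alpha_0+\sum_{j=1}^{\infty}\alpha_j=\sum_{j=0}^{\infty}\alpha_j=S,
\]
which is the desired identity.

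The main obstacle is precisely the justification of this interchange, i.e.\ the integrability question, rather than the algebra, which is automatic once the $1/p_j$ weights are cancelled by $\PP(\tau\geq j)$. If one wishes to assume only that $\sum_j\alpha_j$ converges conditionally, the Tonelli bound is unavailable and one must argue differently: writing $\PE(\hat{S})=\sum_{k\geq 0}S_k(p_k-p_{k+1})$ via $\PP(\tau=k)=p_k-p_{k+1}$, an Abel summation identical to the one carried out in Section~\ref{sec:unbiased} reduces the partial sums to $\sum_{k=0}^{n}\alpha_k-S_n\,p_{n+1}$, and Kronecker's lemma applied with $x_n=p_n^{-1}\uparrow\infty$ forces $p_n S_n\to 0$; letting $n\to\infty$ then yields $\PE(\hat{S})=S$. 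In that regime the finite-expectation claim must be imposed as a hypothesis or derived from a separate second-moment bound, and verifying it is the genuinely delicate point.
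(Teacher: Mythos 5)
Your proof is correct, but it follows a genuinely different route from the paper's. You decompose $\hat S$ over the \emph{indices}, writing $\hat{S}=\alpha_0+\sum_{j\geq 1}(\alpha_j/p_j)\,\mathbbm{1}(\tau\geq j)$, so that Tonelli gives $\PE|\hat S|\leq\sum_j|\alpha_j|<\infty$ and Fubini then yields $\PE(\hat S)=S$ in one stroke, the weights $1/p_j$ cancelling against $\PP(\tau\geq j)=p_j$. The paper instead decomposes over the \emph{value of the stopping time}: it bounds and computes $\sum_{k=0}^n S_k\,\PP(\tau=k)$ with $\PP(\tau=k)=p_k-p_{k+1}$, performs an Abel/telescoping summation to get $\sum_{k=0}^n S_k\PP(\tau=k)=\sum_{k=0}^n\alpha_k-S_np_{n+1}$, and then kills the boundary term via Kronecker's lemma ($p_nS_n\to 0$) together with $p_{n+1}|S_n|\leq p_n|S_n|$. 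Your argument is more elementary and more transparent --- no summation by parts, no Kronecker, and both halves of the claim fall out of a single integrability bound --- at the price of resting entirely on absolute convergence $\sum_j|\alpha_j|<\infty$; this is no real loss here, since the paper's own proof of finite expectation invokes exactly the same hypothesis. The paper's machinery has the advantage that the unbiasedness identity itself only needs convergence of $\sum_k\alpha_k$ to control the boundary term, a point you correctly identify in your closing paragraph, where you in effect reconstruct the paper's Abel--Kronecker argument as the fallback for the conditionally convergent case and note that finite expectation would then require a separate hypothesis.
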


\begin{pf}
Set $\bar S_0=|\alpha_0|$, and $\bar S_k=|\alpha_0| + \sum_{j=1}^k
|\alpha_j|/ p_j$. Then for all $n\geq1$
\begin{eqnarray*}
&& \sum_{k=0}^n |S_k|\PP(
\tau=k)
\\
&&\quad \leq \sum_{k=0}^n \bar
S_k\PP(\tau=k) = \sum_{k=0}^n
\bar S_k (p_k-p_{k+1} )
\\
&&\quad = \bar S_0 p_0 + \sum_{k=1}^n
(\bar S_k-\bar S_{k-1} ) p_k
\\
&&\qquad{}+ \sum
_{k=1}^n \bar S_{k-1} p_k -\sum
_{k=0}^n \bar S_k
p_{k+1}
\\
&&\quad =\sum_{k=0}^n|\alpha_k|-
\bar S_np_{n+1}\leq\sum_{k=0}^n|
\alpha_k|.
\end{eqnarray*}

Since $\sum_n|\alpha_n|<\infty$, we conclude that $\sum_n|S_n|\PP
(\tau=n)<\infty$, hence $\PE(|\hat S|)<\infty$. A similar
calculation as above gives for all $n\geq1$,
\[
\sum_{k=0}^n S_k\PP(\tau=k) =
\sum_{k=0}^n\alpha_k
-S_np_{n+1}.
\]
By Kronecker's lemma $\lim_{n\to\infty} p_n S_n=0$, and $|p_{n+1}
S_n|=(p_{n+1}/p_n)p_n|S_n|\leq p_n|S_n|\to0$, as\vspace*{1pt} $n\to\infty$. We
conclude that $\PE(\hat S)=\sum_{k=0}^\infty S_k\PP(\tau=k) = \sum
_{k=0}^\infty\alpha_k$.
\end{pf}

This random truncation approximation of the series $\sum_n\alpha_n$
is known in the Physics literature as Russian Roulette. It has been
re-derived apparently independently by \citet{mcleish2011general}. In
the Physics literature it is common to choose $\tau$ as a stopping
time of the form
\[
\tau=\inf\{k\geq1: U_k\geq q_k \},
\]
where $\{U_j, j\geq1\}$ are i.i.d. $\mathcal{U}(0,1)$, $q_j\in
(0,1]$ and $\hat{S} = S_{\tau-1}$. In this case $p_n=\prod_{j=1}^{n-1}
q_j$. The random time $\tau$ can be thought as the running
time of the algorithm. It is tempting to choose $\tau$ such that the
Russian Roulette terminates very quickly. The next result shows that
the resulting variance will be high, possibly infinite.

%
\begin{proposition}\label{propvar}
If
\[
\sum_{n\geq1}\frac{|\alpha_n|}{p_n}\sup
_{j\geq n}\Biggl\llvert\sum_{\ell=n}^j
\alpha_\ell\Biggr\rrvert<\infty,
\]
then $\textsf{Var}(\hat S)<\infty$ and
\[
\textsf{Var}(\hat S)=\alpha_0^2 +\sum
_{n\geq1} \frac{\alpha
_n^2}{p_n} + 2\sum_{n\geq1}
\alpha_n S_{n-1} -S^2.
\]

If $\{\alpha_n\}$ is a sequence of nonnegative numbers and $\sum_{n\geq
1} \alpha_n S_{n-1}=\infty$, then $\textsf{Var}(\hat
S)=\infty$.
\end{proposition}

\begin{pf}
$\textsf{Var}(\hat S)=\PE(\hat S^2)-S^2$. So it\vspace*{1pt} suffices to work with
$\PE(\hat S^2)$. $\PE(\hat S^2)=\sum_{k=0}^\infty S_k^2\PP(\tau
=k)=\lim_{n\to\infty}\sum_{k=0}^n S_k^2\PP(\tau=k)$. For any
$n\geq1$, we use the same telescoping trick used in Proposition \ref
{rouletteprop1} to get
%
%
\begin{eqnarray}
\label{propvareq1}
\qquad&& \sum_{k=0}^n
S_k^2\PP(\tau=k)\nonumber
\\
&&\quad =\sum_{k=0}^nS_{k-1}^2
(p_k-p_{k+1} )
\\
&&\quad = \alpha_0^2 + \sum_{k=1}^n
\frac{\alpha_k^2}{p_k} +2\sum_{k=1}^n
\alpha_k S_{k-1} -S_n^2p_{n+1}.\nonumber
\end{eqnarray}

By Jensen's inequality $S_n^2\leq(\sum_{k=1}^n p_k^{-1}
)\times  (\sum_{k=1}^n p_k^{-1}\alpha_k^2 )$. Hence, using
Kronecker's lemma, we see that
%
%
\begin{eqnarray}\label{propvareq2}
p_{n+1}S_n^2&\leq&
p_nS_n^2\nonumber
\\
&\leq&\Biggl(p_n\sum
_{k=1}^n \frac
{1}{p_k} \Biggr) \Biggl(
\sum_{k=1}^n \frac{\alpha_k^2}{p_k} \Biggr)
\\
&=& o
\Biggl(\sum_{k=1}^n \frac{\alpha_k^2}{p_k}
\Biggr),\quad\mbox{as } n\to\infty,\nonumber
\end{eqnarray}
so it suffices to show that the sequence $\sum_{k=1}^n\frac{\alpha
_k^2}{p_k} +\sum_{k=1}^n \alpha_k S_{k-1}$ is bounded. But
\begin{eqnarray*}
&& \Biggl\llvert\sum_{j=1}^n
\frac{\alpha_k^2}{p_k} +\sum_{k=1}^n
\alpha_k S_{k-1}\Biggr\rrvert
\\
&&\quad =\Biggl\llvert
\alpha_0\sum_{j=0}^n
\alpha_j + \sum_{j=1}^n
\frac{\alpha_j}{p_j} \Biggl(\sum_{k=j}^n
\alpha_k \Biggr)\Biggr\rrvert
\\
&&\quad \leq |\alpha_0|\sum_{j\geq0}|
\alpha_j| + \sup_n\sum
_{j=1}^n \frac{|\alpha_j|}{p_j}\Biggl\llvert\sum
_{k=j}^n\alpha_k\Biggr\rrvert,
\end{eqnarray*}
and the two terms on the right-hand side are bounded under the stated
assumptions. Therefore the series $\sum_n S_n^2\PP(\tau=n)$ is
summable and the variance formula follows by taking the limit as $n\to
\infty$ in (\ref{propvareq1}).

To establish the rest of the proposition, we deduce from (\ref
{propvareq2}) that for $n$ large enough
\[
\sum_{k=1}^n S_k^2
\PP(\tau=k) \geq\alpha_0^2 + 2\sum
_{k=1}^n \alpha_k S_{k-1},
\]
which easily implies the statement.
\end{pf}

%
\begin{remark}
As an example, for a geometric sequence $\alpha_i=\alpha^i$ for
$\alpha\in(0,1)$, and we choose $q_i=q$ for some $q\in(0,1)$, then
for $\alpha^2/q<1$, the condition of Proposition \ref{propvar} are
satisfied\vspace*{1pt} and $\textsf{Var}(\hat S)<\infty$. If $q>\alpha^2$ the variance is
infinite. The average computing time of the algorithm is $\PE(\hat
\tau)=\frac{1}{1-q}$. Although this variance/computing speed
trade-off can be investigate analytically, a rule of thumb that works
well in simulations is to choose $q=\alpha$.
\end{remark}

\section{Computing Absolute Measure Expectations}\label{secabsmeas}
Let $(\Xset,\B)$ denotes a general measure space with a reference
sigma-finite measure $dx$. Let $\pi: \Xset\to\rset$ a function
taking possibly negative values such that $\int|\pi(x)|\,dx<\infty$.
We assume that $\int\pi(x)\,dx>0$ and we wish to compute the quantity
\[
I=\frac{\int h(x)\pi(x)\,dx}{\int\pi(x)\,dx},
\]
for some measurable function $h: \Xset\to\rset$ such that $\int
|h(x)\pi(x)|\,dx<\infty$. We introduce $\sigma(x)= \textsf{sign}(\pi
(x))$, and $p(x)=\frac{|\pi(x)|}{\int|\pi(x)|\,dx}$. Thus $p$ is a
probability density on $\Xset$. Suppose that we can construct an
ergodic Markov chain $\{X_n, n\geq0\}$ with invariant distribution
$p$, for instance using the Metropolis--Hastings algorithm. An
importance sampling-type estimate for $I$ is given by
\[
\hat I_n=\frac{\sum_{k=1}^n \sigma(X_k)h(X_k)}{\sum_{k=1}^n \sigma(X_k)}.
\]
$\hat I_n$ has the following properties.
%
\begin{proposition}

\begin{longlist}[1.]
\item[1.] If the Markov chain $\{X_n, n\geq0\}$ is phi-irreducible and
aperiodic, then $\hat I_n$ converges almost surely to $I$ as $n\to
\infty$.

\item[2.] Suppose that $\{X_n, n\geq0\}$ is geometrically ergodic and
$\int|h(x)|^{2+\varepsilon}p(x)\,dx<\infty$ for some $\varepsilon>0$. Then
\[
\sqrt{n} (\hat I_n-I )\stackrel{\textsf{w}} {\to} \mathbf{N}
\bigl(0,\sigma^2(h)\bigr),
\]
where
\[
\sigma^2(h)=
\frac{C_{11}+
I^2C_{22} -2IC_{12}}{r^2},
\]
and
\begin{eqnarray*}
C_{11}&=&\textsf{Var}_p \bigl(\{h\sigma\}(X) \bigr)
\\
&&{}\cdot\sum
_{j=-\infty
}^\infty\textsf{Corr}_p
\bigl(\{h\sigma\}(X),P^{|j|}\{h\sigma\} (X) \bigr),
\\
C_{22}&=&\textsf{Var}_p\bigl(\sigma(X)\bigr)\sum
_{j=-\infty}^\infty\textsf{Corr}_p \bigl(
\sigma(X),P^{|j|}\sigma(X) \bigr),
\\
C_{12}&=&\frac{1}{2}\sqrt{\textsf{Var}_p\bigl(\{h
\sigma\}(X)\bigr)\textsf{Var}_p\bigl(\sigma(X)\bigr)}
\\
&&\cdot\Biggl[\sum_{j=-\infty}^\infty
\textsf{Corr}_p \bigl(\{ h\sigma\}(X),P^{|j|}\sigma(X)
\bigr)
\\
&&{}+\sum_{j=-\infty}^\infty\textsf{Corr}_p
\bigl(\sigma(X),P^{|j|}\{h\sigma\}(X) \bigr) \Biggr].
\end{eqnarray*}
\end{longlist}
\end{proposition}

\begin{pf}
Part (1) is a straightforward application of the law of large numbers
for the Markov chain $\{X_n, n\geq0\}$: as $n\to\infty$, $\hat I_n$
converges almost surely to
\[
\frac{\int\sigma(x)h(x)p(x)\,dx}{\int\sigma(x)p(x)\,dx}= \frac{\int
h(x)\pi(x)\,dx}{\int\pi(x)\,dx}=I.
\]

A bivariate central limit theorem using the Cramer--Wold device gives that
\begin{eqnarray*}
&& \sqrt{n}\pmatrix{\displaystyle \frac{1}{n}\sum_{k=1}^n
\sigma(X_k)h(X_k) - rI
\cr
\displaystyle\frac{1}{n}\sum
_{k=1}^n \sigma(X_k) - r}
\\
&&\quad \stackrel{
\textsf{w}} {\to} \pmatrix{Z_1
\cr
Z_2}\sim\mathbf{N} \biggl[
\pmatrix{0
\cr
0}, \pmatrix{C_{11} & C_{12}
\cr
C_{12}
& C_{22}} \biggr],
\end{eqnarray*}
where $C_{11}, C_{12}$ and $C_{22}$ are as given above.


By the delta method, it follows that $\sqrt{n}(\hat I_n-I)\stackrel
{\textsf{w}}{\to} \frac{Z_1-I Z_2}{r}\sim\textbf{N} (0,\frac
{C_{11}+I^2C_{22}-2IC_{12}}{r^2} )$.
\end{pf}

We can roughly approximate the asymptotic variance $\sigma^2(h)$ as
follows. Suppose for simplicity that the Markov chain is reversible, so that
\begin{eqnarray*}
C_{12} &=& \sqrt{\textsf{Var}_p\bigl(\{h\sigma\}(X)\bigr)
\textsf{Var}_p\bigl(\sigma(X)\bigr)}
\\
&&{}\cdot \sum
_{j=-\infty}^\infty\textsf{Corr}_p \bigl(\{h
\sigma\} (X),P^{|j|}\sigma(X) \bigr).
\end{eqnarray*}
Assume also that the mixing of the Markov chain is roughly the
same across all the functions:
\begin{eqnarray*}
&& \sum_{j=-\infty}^\infty\textsf{Corr}_p
\bigl(\{h\sigma\} (X),P^{|j|}\{h\sigma\}(X) \bigr)
\\
&&\quad =\sum
_{j=-\infty}^\infty\textsf{Corr}_p \bigl(\sigma
(X),P^{|j|}\sigma(X) \bigr)
\\
&&\quad =\sum_{j=-\infty}^\infty\frac{\textsf{Corr}_p (\sigma
(X),P^{|j|}\{h\sigma\}(X) )}{\textsf{Corr}_p (\sigma(X),\{
h\sigma\}(X) )}\equiv
V,
\end{eqnarray*}
where we also assume that $\textsf{Corr}_p (\{h\sigma\}
(X),\sigma(X) )\neq0$. Therefore
\begin{eqnarray*}
\frac{\sigma^2(h)}{V}&\approx&
\bigl(\textsf{Var}_p\bigl(\{h\sigma\}
(X)\bigr) + I^2\textsf{Var}_p\bigl(\sigma(X)\bigr)
\\
&&{} -2 I \textsf{Cov}_p \bigl(\{
h\sigma\}(X),\sigma(X) \bigr)\bigr)
/r^2
\\
&=&\frac{r\check\pi
(h^2\sigma) + I^2-2I r \check\pi(h\sigma)}{r^2},
\end{eqnarray*}
where $\check\pi=\pi/\int\pi$, and $\check\pi(f)=\int f(x)\check
\pi(x)\,dx$. By a Taylor approximation of $(h,\sigma)\mapsto h^2\sigma
$ around $(\check\pi(h), \check\pi(\sigma))$, it comes easily that
$r\check\pi(h^2\sigma)=\check\pi(h^2)+ 2Ir\check\pi(h\sigma
)-2I^2$, so that
\[
\sigma^2(h)\approx\bigl(\check\pi\bigl(h^2
\bigr)-I^2 \bigr) \times\frac{V}{r^2}.
\]
Thus\vspace*{1pt} a quick approximation of the Monte Carlo variance of $\hat I_n$ is
given by
\begin{eqnarray*}
&& \frac{1}{n}\times\biggl\{\frac{\sum_{k=1}^nh^2(X_k)\sigma
(X_k)}{\sum_{k=1}^n\sigma(X_k)}
\\
&&\quad{}- \biggl(\frac{\sum_{k=1}^nh(X_k)\sigma
(X_k)}{\sum_{k=1}^n\sigma(X_k)}
\biggr)^2 \biggr\}
\\
&&\quad{}\cdot \frac{\hat V}{ \{\sfrac{1}{n}\sum_{k=1}^n\sigma
(X_k) \}^2},
\end{eqnarray*}
where $\hat V$ is an estimate of the common autocorrelation sum. For
example $\hat V$ can be taken as the lag-window estimate of $\sum
_{j=-\infty}^\infty\textsf{Corr}_p (\{h\sigma\}(X),\break P^{|j|}\{
h\sigma\}(X) )$.

The quantity $\frac{1}{n}\sum_{k=1}^n\sigma(X_k)$ which estimates
$r$ is indicative of the severity of the issue of returning negative
estimates. The smaller $r$, the harder it is to estimate $I$ accurately.
\end{appendix}

\section*{Acknowledgements}
Anne-Marie Lyne is supported by UCL Systems Biology.
Mark Girolami is most grateful to Arnaud Doucet, David Barber,
Christian Robert, Nicolas Chopin and Gareth Roberts for numerous
motivating discussions regarding this work.
Mark Girolami is supported by the
UK Engineering and Physical Sciences Research Council \mbox{(EPSRC)} via the
Established Career Research Fellowship EP/J016934/1 and the Programme
Grant \textit{Enabling Quantification of Uncertainty for Large-Scale
Inverse Problems}, EP/K034154/1, \surl{http://www.warwick.ac.uk/equip}.
He also gratefully acknowledges support from a Royal Society Wolfson
Research Merit Award. Yves Atchad\'e is supported by the NSF on grant NSF-SES
1229261. Heiko Strathmann is supported by the Gatsby Charitable Foundation.
Daniel Simpson is
supported by CRiSM (Warwick).


%
%

%

\begin{thebibliography}{88}

\bibitem[\protect\citeauthoryear{Adams, Murray and MacKay}{2009}]{adams2009nonparametric}
%
\begin{bmisc}[auto]
\bauthor{\bsnm{Adams},~\bfnm{R.~P.}\binits{R.~P.}},
\bauthor{\bsnm{Murray},~\bfnm{I.}\binits{I.}}
\AND
\bauthor{\bsnm{MacKay},~\bfnm{D.~J.}\binits{D.~J.}}
(\byear{2009}).
\bhowpublished{Nonparametric bayesian density modeling with gaussian processes.
Preprint. Available at \arxivurl{arXiv:0912.4896}.}
\end{bmisc}
%
\bptok{imsref}%
\endbibitem

\bibitem[\protect\citeauthoryear{Alquier et~al.}{2014}]{alquier2014noisy}
%
\begin{bmisc}[auto:parserefs-M02]
\bauthor{\bsnm{Alquier},~\bfnm{P.}\binits{P.}},
\bauthor{\bsnm{Friel},~\bfnm{N.}\binits{N.}},
\bauthor{\bsnm{Everitt},~\bfnm{R.}\binits{R.}} \AND
\bauthor{\bsnm{Boland},~\bfnm{A.}\binits{A.}}
(\byear{2014}).
\bhowpublished{Noisy Monte Carlo: Convergence of Markov chains with
approximate transition kernels.
Preprint. Available at \arxivurl{arXiv:1403.5496}.}
\end{bmisc}
%
\bptok{imsref}%
\endbibitem

\bibitem[\protect\citeauthoryear{Andrieu and
Roberts}{2009}]{andrieu2009pseudo}
%
\begin{barticle}[mr]
\bauthor{\bsnm{Andrieu},~\bfnm{Christophe}\binits{C.}} \AND
\bauthor{\bsnm{Roberts},~\bfnm{Gareth~O.}\binits{G.~O.}}
(\byear{2009}).
\btitle{The pseudo-marginal approach for efficient {M}onte {C}arlo
computations}.
\bjournal{Ann. Statist.}
\bvolume{37}
\bpages{697--725}.
\bid{doi={10.1214/07-AOS574}, issn={0090-5364}, mr={2502648}}
\end{barticle}
%
\bptok{imsref}%
\endbibitem

\bibitem[\protect\citeauthoryear{Andrieu and
Vihola}{2014}]{andrieu2014establishing}
%
\begin{bmisc}[auto]
\bauthor{\bsnm{Andrieu},~\bfnm{Christophe}\binits{C.}} \AND
\bauthor{\bsnm{Vihola},~\bfnm{Matti}\binits{M.}}
(\byear{2014}).
\bhowpublished{Establishing some order amongst exact approximations of mcmcs.
Preprint. Available at \arxivurl{arXiv:1404.6909}.}
\end{bmisc}
%
\bptok{imsref}%
\endbibitem

\bibitem[\protect\citeauthoryear{Atchad{\'e}, Lartillot and
Robert}{2013}]{atchade2008bayesian}
%
\begin{barticle}[mr]
\bauthor{\bsnm{Atchad{\'e}},~\bfnm{Yves~F.}\binits{Y.~F.}},
\bauthor{\bsnm{Lartillot},~\bfnm{Nicolas}\binits{N.}} \AND
\bauthor{\bsnm{Robert},~\bfnm{Christian}\binits{C.}}
(\byear{2013}).
\btitle{Bayesian computation for statistical models with intractable
normalizing constants}.
\bjournal{Braz. J. Probab. Stat.}
\bvolume{27}
\bpages{416--436}.
\bid{doi={10.1214/11-BJPS174}, issn={0103-0752}, mr={3105037}}
\bptnote{check volume, check pages, check year}%
\end{barticle}
%
\bptok{imsref}%
\endbibitem

\bibitem[\protect\citeauthoryear{Aune, Simpson and
Eidsvik}{2014}]{aune2014parameter}
%
\begin{barticle}[mr]
\bauthor{\bsnm{Aune},~\bfnm{Erlend}\binits{E.}},
\bauthor{\bsnm{Simpson},~\bfnm{Daniel~P.}\binits{D.~P.}} \AND
\bauthor{\bsnm{Eidsvik},~\bfnm{Jo}\binits{J.}}
(\byear{2014}).
\btitle{Parameter estimation in high dimensional {G}aussian distributions}.
\bjournal{Stat. Comput.}
\bvolume{24}
\bpages{247--263}.
\bid{doi={10.1007/s11222-012-9368-y}, issn={0960-3174}, mr={3165552}}
\end{barticle}
%
\bptok{imsref}%
\endbibitem

\bibitem[\protect\citeauthoryear{Bai, Fahey and Golub}{1996}]{bai1996some}
%
\begin{barticle}[mr]
\bauthor{\bsnm{Bai},~\bfnm{Zhaojun}\binits{Z.}},
\bauthor{\bsnm{Fahey},~\bfnm{Mark}\binits{M.}} \AND
\bauthor{\bsnm{Golub},~\bfnm{Gene}\binits{G.}}
(\byear{1996}).
\btitle{Some large-scale matrix computation problems}.
\bjournal{J. Comput. Appl. Math.}
\bvolume{74}
\bpages{71--89}.
\bid{doi={10.1016/0377-0427(96)00018-0}, issn={0377-0427}, mr={1430368}}
\end{barticle}
%
\bptok{imsref}%
\endbibitem

\bibitem[\protect\citeauthoryear{Bakeyev and
De~Forcrand}{2001}]{bakeyev2001noisy}
%
\begin{barticle}[auto:parserefs-M02]
\bauthor{\bsnm{Bakeyev},~\bfnm{T.}\binits{T.}} \AND
\bauthor{\bsnm{De Forcrand},~\bfnm{P.}\binits{P.}}
(\byear{2001}).
\btitle{Noisy Monte Carlo algorithm reexamined}.
\bjournal{Phys. Rev. D}
\bvolume{63}
\bpages{54505}.
\end{barticle}
%
\bptok{imsref}%
\endbibitem

\bibitem[\protect\citeauthoryear{Bardenet, Doucet and
Holmes}{2014}]{bardenet2014towards}
%
\begin{binproceedings}[auto:parserefs-M02]
\bauthor{\bsnm{Bardenet},~\bfnm{R.}\binits{R.}},
\bauthor{\bsnm{Doucet},~\bfnm{A.}\binits{A.}} \AND
\bauthor{\bsnm{Holmes},~\bfnm{C.}\binits{C.}}
(\byear{2014}).
\btitle{Towards scaling up Markov chain Monte Carlo: An adaptive
subsampling approach}.
In \bbooktitle{Proceedings of the 31st International Conference on
Machine Learning}
\bpages{405--413}.
\bpublisher{JMLR Workshop and Conference Proceedings}.
\end{binproceedings}
%
\bptok{imsref}%
\endbibitem

\bibitem[\protect\citeauthoryear{Beaumont}{2003}]{beaumont2003estimation}
%
\begin{barticle}[pbm]
\bauthor{\bsnm{Beaumont},~\bfnm{Mark~A.}\binits{M.~A.}}
(\byear{2003}).
\btitle{Estimation of population growth or decline in genetically
monitored populations}.
\bjournal{Genetics}
\bvolume{164}
\bpages{1139--1160}.
\bid{issn={0016-6731}, pmcid={1462617}, pmid={12871921}}
\end{barticle}
%
\bptok{imsref}%
\endbibitem

\bibitem[\protect\citeauthoryear{Beaumont, Zhang and
Balding}{2002}]{beaumont2002approximate}
%
\begin{barticle}[auto:parserefs-M02]
\bauthor{\bsnm{Beaumont},~\bfnm{M.~A.}\binits{M.~A.}},
\bauthor{\bsnm{Zhang},~\bfnm{W.}\binits{W.}} \AND
\bauthor{\bsnm{Balding},~\bfnm{D.~J.}\binits{D.~J.}}
(\byear{2002}).
\btitle{Approximate Bayesian computation in population genetics}.
\bjournal{Genetics}
\bvolume{162}
\bpages{2025--2035}.
\end{barticle}
%
\bptok{imsref}%
\endbibitem

\bibitem[\protect\citeauthoryear{Besag}{1974}]{besag1974spatial}
%
\begin{barticle}[mr]
\bauthor{\bsnm{Besag},~\bfnm{Julian}\binits{J.}}
(\byear{1974}).
\btitle{Spatial interaction and the statistical analysis of lattice systems}.
\bjournal{J. Roy. Statist. Soc. Ser. B}
\bvolume{36}
\bpages{192--236}.
\bid{issn={0035-9246}, mr={0373208}}
\bptnote{check related}%
\end{barticle}
%
\bptok{imsref}%
\endbibitem

\bibitem[\protect\citeauthoryear{Besag}{1986}]{besag1986statistical}
%
\begin{barticle}[mr]
\bauthor{\bsnm{Besag},~\bfnm{Julian}\binits{J.}}
(\byear{1986}).
\btitle{On the statistical analysis of dirty pictures}.
\bjournal{J.~Roy. Statist. Soc. Ser. B}
\bvolume{48}
\bpages{259--302}.
\bid{issn={0035-9246}, mr={0876840}}
\bptnote{check volume}%
\end{barticle}
%
\bptok{imsref}%
\endbibitem

\bibitem[\protect\citeauthoryear{Besag and Moran}{1975}]{besag1975estimation}
%
\begin{barticle}[mr]
\bauthor{\bsnm{Besag},~\bfnm{J.~E.}\binits{J.~E.}} \AND
\bauthor{\bsnm{Moran},~\bfnm{P.~A.~P.}\binits{P.~A.~P.}}
(\byear{1975}).
\btitle{On the estimation and testing of spatial interaction in
{G}aussian lattice processes}.
\bjournal{Biometrika}
\bvolume{62}
\bpages{555--562}.
\bid{issn={0006-3444}, mr={0391451}}
\end{barticle}
%
\bptok{imsref}%
\endbibitem

\bibitem[\protect\citeauthoryear{Beskos et~al.}{2006}]{Beskos2006id}
%
\begin{barticle}[mr]
\bauthor{\bsnm{Beskos},~\bfnm{Alexandros}\binits{A.}},
\bauthor{\bsnm{Papaspiliopoulos},~\bfnm{Omiros}\binits{O.}},
\bauthor{\bsnm{Roberts},~\bfnm{Gareth~O.}\binits{G.~O.}} \AND
\bauthor{\bsnm{Fearnhead},~\bfnm{Paul}\binits{P.}}
(\byear{2006}).
\btitle{Exact and computationally efficient likelihood-based
estimation for discretely observed diffusion processes}.
\bjournal{J. R. Stat. Soc. Ser. B. Stat. Methodol.}
\bvolume{68}
\bpages{333--382}.
\bid{doi={10.1111/j.1467-9868.2006.00552.x}, issn={1369-7412}, mr={2278331}}
\bptnote{check related}%
\end{barticle}
%
\bptok{imsref}%
\endbibitem

\bibitem[\protect\citeauthoryear{Bhanot and
Kennedy}{1985}]{bhanot1985bosonic}
%
\begin{barticle}[auto:parserefs-M02]
\bauthor{\bsnm{Bhanot},~\bfnm{G.}\binits{G.}} \AND
\bauthor{\bsnm{Kennedy},~\bfnm{A.}\binits{A.}}
(\byear{1985}).
\btitle{Bosonic lattice gauge theory with noise}.
\bjournal{Phys. Lett. B}
\bvolume{157}
\bpages{70--76}.
\end{barticle}
%
\bptok{imsref}%
\endbibitem

\bibitem[\protect\citeauthoryear{Bolin and Lindgren}{2011}]{bolin2009}
%
\begin{barticle}[mr]
\bauthor{\bsnm{Bolin},~\bfnm{David}\binits{D.}} \AND
\bauthor{\bsnm{Lindgren},~\bfnm{Finn}\binits{F.}}
(\byear{2011}).
\btitle{Spatial models generated by nested stochastic partial
differential equations, with an application to global ozone mapping}.
\bjournal{Ann. Appl. Stat.}
\bvolume{5}
\bpages{523--550}.
\bid{doi={10.1214/10-AOAS383}, issn={1932-6157}, mr={2810408}}
\end{barticle}
%
\bptok{imsref}%
\endbibitem

\bibitem[\protect\citeauthoryear{Booth}{2007}]{booth2007unbiased}
%
\begin{barticle}[auto:parserefs-M02]
\bauthor{\bsnm{Booth},~\bfnm{T.}\binits{T.}}
(\byear{2007}).
\btitle{Unbiased Monte Carlo estimation of the reciprocal of an integral}.
\bjournal{Nucl. Sci. Eng.}
\bvolume{156}
\bpages{403--407}.
\end{barticle}
%
\bptok{imsref}%
\endbibitem

\bibitem[\protect\citeauthoryear{Caimo and Friel}{2011}]{caimo2011bayesian}
%
\begin{barticle}[auto:parserefs-M02]
\bauthor{\bsnm{Caimo},~\bfnm{A.}\binits{A.}} \AND
\bauthor{\bsnm{Friel},~\bfnm{N.}\binits{N.}}
(\byear{2011}).
\btitle{Bayesian inference for exponential random graph models}.
\bjournal{Soc. Netw.}
\bvolume{33}
\bpages{41--55}.
\end{barticle}
%
\bptok{imsref}%
\endbibitem

\bibitem[\protect\citeauthoryear{Carter and
Cashwell}{1975}]{carter1975particle}
%
\begin{bmisc}[mr]
\bauthor{\bsnm{Carter},~\bfnm{L.~L.}\binits{L.~L.}} \AND
\bauthor{\bsnm{Cashwell},~\bfnm{E.~D.}\binits{E.~D.}}
(\byear{1975}).
\bhowpublished{Particle-transport simulation with the {M}onte {C}arlo method.
Technical report, Los Alamos Scientific Lab., N. Mex. (USA).}
\bid{mr={0416421}}
\end{bmisc}
%
\bptok{imsref}%
\endbibitem

\bibitem[\protect\citeauthoryear{Cressie and
Johannesson}{2008}]{Cressie2008Fixed}
%
\begin{barticle}[mr]
\bauthor{\bsnm{Cressie},~\bfnm{Noel}\binits{N.}} \AND
\bauthor{\bsnm{Johannesson},~\bfnm{Gardar}\binits{G.}}
(\byear{2008}).
\btitle{Fixed rank kriging for very large spatial data sets}.
\bjournal{J. R. Stat. Soc. Ser. B. Stat. Methodol.}
\bvolume{70}
\bpages{209--226}.
\bid{doi={10.1111/j.1467-9868.2007.00633.x}, issn={1369-7412}, mr={2412639}}
\end{barticle}
%
\bptok{imsref}%
\endbibitem

\bibitem[\protect\citeauthoryear{Del~Moral, Doucet and
Jasra}{2006}]{delmoral2006}
%
\begin{barticle}[mr]
\bauthor{\bsnm{Del Moral},~\bfnm{Pierre}\binits{P.}},
\bauthor{\bsnm{Doucet},~\bfnm{Arnaud}\binits{A.}} \AND
\bauthor{\bsnm{Jasra},~\bfnm{Ajay}\binits{A.}}
(\byear{2006}).
\btitle{Sequential {M}onte {C}arlo samplers}.
\bjournal{J. R. Stat. Soc. Ser. B. Stat. Methodol.}
\bvolume{68}
\bpages{411--436}.
\bid{doi={10.1111/j.1467-9868.2006.00553.x}, issn={1369-7412}, mr={2278333}}
\end{barticle}
%
\bptok{imsref}%
\endbibitem

\bibitem[\protect\citeauthoryear{Diggle}{1990}]{diggle1990point}
%
\begin{barticle}[auto:parserefs-M02]
\bauthor{\bsnm{Diggle},~\bfnm{P.~J.}\binits{P.~J.}}
(\byear{1990}).
\btitle{A point process modelling approach to raised incidence of a
rare phenomenon in the vicinity of a prespecified point}.
\bjournal{J. Roy. Statist. Soc. Ser. A}
\bpages{349--362}.
\end{barticle}
%
\bptok{imsref}%
\endbibitem

\bibitem[\protect\citeauthoryear{Douc and Robert}{2011}]{douc2011vanilla}
%
\begin{barticle}[mr]
\bauthor{\bsnm{Douc},~\bfnm{Randal}\binits{R.}} \AND
\bauthor{\bsnm{Robert},~\bfnm{Christian~P.}\binits{C.~P.}}
(\byear{2011}).
\btitle{A vanilla {R}ao--{B}lackwellization of
{M}etropolis--{H}astings algorithms}.
\bjournal{Ann. Statist.}
\bvolume{39}
\bpages{261--277}.
\bid{doi={10.1214/10-AOS838}, issn={0090-5364}, mr={2797846}}
\end{barticle}
%
\bptok{imsref}%
\endbibitem

\bibitem[\protect\citeauthoryear{Doucet, Pitt and
Kohn}{2012}]{doucet2012efficient}
%
\begin{bmisc}[auto:parserefs-M02]
\bauthor{\bsnm{Doucet},~\bfnm{A.}\binits{A.}},
\bauthor{\bsnm{Pitt},~\bfnm{M.}\binits{M.}} \AND
\bauthor{\bsnm{Kohn},~\bfnm{R.}\binits{R.}}
(\byear{2012}).
\bhowpublished{Efficient implementation of Markov chain Monte Carlo
when using an unbiased likelihood estimator.
Preprint. Available at \arxivurl{arXiv:1210.1871}.}
\end{bmisc}
%
\bptok{imsref}%
\endbibitem

\bibitem[\protect\citeauthoryear{Eidsvik
et~al.}{2014}]{eidsvik2013estimation}
%
\begin{barticle}[mr]
\bauthor{\bsnm{Eidsvik},~\bfnm{Jo}\binits{J.}},
\bauthor{\bsnm{Shaby},~\bfnm{Benjamin~A.}\binits{B.~A.}},
\bauthor{\bsnm{Reich},~\bfnm{Brian~J.}\binits{B.~J.}},
\bauthor{\bsnm{Wheeler},~\bfnm{Matthew}\binits{M.}} \AND
\bauthor{\bsnm{Niemi},~\bfnm{Jarad}\binits{J.}}
(\byear{2014}).
\btitle{Estimation and prediction in spatial models with block
composite likelihoods}.
\bjournal{J. Comput. Graph. Statist.}
\bvolume{23}
\bpages{295--315}.
\bid{doi={10.1080/10618600.2012.760460}, issn={1061-8600}, mr={3215812}}
\bptnote{check year}%
\end{barticle}
%
\bptok{imsref}%
\endbibitem

\bibitem[\protect\citeauthoryear{Everitt}{2012}]{everitt2012bayesian}
%
\begin{barticle}[mr]
\bauthor{\bsnm{Everitt},~\bfnm{Richard~G.}\binits{R.~G.}}
(\byear{2012}).
\btitle{Bayesian parameter estimation for latent {M}arkov random
fields and social networks}.
\bjournal{J. Comput. Graph. Statist.}
\bvolume{21}
\bpages{940--960}.
\bid{doi={10.1080/10618600.2012.687493}, issn={1061-8600}, mr={3005805}}
\bptnote{check volume, check pages, check year}%
\end{barticle}
%
\bptok{imsref}%
\endbibitem

\bibitem[\protect\citeauthoryear{Fearnhead, Papaspiliopoulos and Roberts}{2008}]{Fearnhead2008kk}
%
\begin{barticle}[mr]
\bauthor{\bsnm{Fearnhead},~\bfnm{Paul}\binits{P.}},
\bauthor{\bsnm{Papaspiliopoulos},~\bfnm{Omiros}\binits{O.}} \AND
\bauthor{\bsnm{Roberts},~\bfnm{Gareth~O.}\binits{G.~O.}}
(\byear{2008}).
\btitle{Particle filters for partially observed diffusions}.
\bjournal{J. R. Stat. Soc. Ser. B. Stat. Methodol.}
\bvolume{70}
\bpages{755--777}.
\bid{doi={10.1111/j.1467-9868.2008.00661.x}, issn={1369-7412}, mr={2523903}}
\end{barticle}
%
\bptok{imsref}%
\endbibitem

\bibitem[\protect\citeauthoryear{Friel and
Pettitt}{2004}]{friel2004likelihood}
%
\begin{barticle}[mr]
\bauthor{\bsnm{Friel},~\bfnm{N.}\binits{N.}} \AND
\bauthor{\bsnm{Pettitt},~\bfnm{A.~N.}\binits{A.~N.}}
(\byear{2004}).
\btitle{Likelihood estimation and inference for the autologistic model}.
\bjournal{J. Comput. Graph. Statist.}
\bvolume{13}
\bpages{232--246}.
\bid{doi={10.1198/1061860043029}, issn={1061-8600}, mr={2044879}}
\bptnote{check pages}%
\end{barticle}
%
\bptok{imsref}%
\endbibitem

\bibitem[\protect\citeauthoryear{Friel et~al.}{2009}]{friel2009bayesian}
%
\begin{barticle}[mr]
\bauthor{\bsnm{Friel},~\bfnm{N.}\binits{N.}},
\bauthor{\bsnm{Pettitt},~\bfnm{A.~N.}\binits{A.~N.}},
\bauthor{\bsnm{Reeves},~\bfnm{R.}\binits{R.}} \AND
\bauthor{\bsnm{Wit},~\bfnm{E.}\binits{E.}}
(\byear{2009}).
\btitle{Bayesian inference in hidden {M}arkov random fields for binary
data defined on large lattices}.
\bjournal{J. Comput. Graph. Statist.}
\bvolume{18}
\bpages{243--261}.
\bid{doi={10.1198/jcgs.2009.06148}, issn={1061-8600}, mr={2749831}}
\end{barticle}
%
\bptok{imsref}%
\endbibitem

\bibitem[\protect\citeauthoryear{Gelman and
Meng}{1998}]{gelman1998simulating}
%
\begin{barticle}[mr]
\bauthor{\bsnm{Gelman},~\bfnm{Andrew}\binits{A.}} \AND
\bauthor{\bsnm{Meng},~\bfnm{Xiao-Li}\binits{X.-L.}}
(\byear{1998}).
\btitle{Simulating normalizing constants: From importance sampling to
bridge sampling to path sampling}.
\bjournal{Statist. Sci.}
\bvolume{13}
\bpages{163--185}.
\bid{doi={10.1214/ss/1028905934}, issn={0883-4237}, mr={1647507}}
\bptnote{check volume}%
\end{barticle}
%
\bptok{imsref}%
\endbibitem

\bibitem[\protect\citeauthoryear{Gelman et~al.}{1995}]{Gelman03}
%
\begin{bbook}[mr]
\bauthor{\bsnm{Gelman},~\bfnm{Andrew}\binits{A.}},
\bauthor{\bsnm{Carlin},~\bfnm{John~B.}\binits{J.~B.}},
\bauthor{\bsnm{Stern},~\bfnm{Hal~S.}\binits{H.~S.}} \AND
\bauthor{\bsnm{Rubin},~\bfnm{Donald~B.}\binits{D.~B.}}
(\byear{1995}).
\btitle{Bayesian Data Analysis}.
\bpublisher{Chapman \& Hall},
\blocation{London}.
\bid{mr={1385925}}
\bptnote{check year}%
\end{bbook}
%
\bptok{imsref}%
\endbibitem

\bibitem[\protect\citeauthoryear{Ghaoui and Gueye}{2009}]{el2008convex}
%
\begin{bincollection}[auto:parserefs-M02]
\bauthor{\bsnm{Ghaoui},~\bfnm{L.~E.}\binits{L.~E.}} \AND
\bauthor{\bsnm{Gueye},~\bfnm{A.}\binits{A.}}
(\byear{2009}).
\btitle{A convex upper bound on the log-partition function for binary
distributions}.
In \bbooktitle{Advances in Neural Information Processing Systems}
(\beditor{\bfnm{D.}\binits{D.}~\bsnm{Koller}},
\beditor{\bfnm{D.}\binits{D.}~\bsnm{Schuurmans}},
\beditor{\bfnm{Y.}\binits{Y.}~\bsnm{Bengio}} \AND
\beditor{\bfnm{L.}\binits{L.}~\bsnm{Bottou}}, eds.)
\bvolume{21}
\bpages{409--416}.
\bpublisher{Neural Information Processing Systems (NIPS)}.
\end{bincollection}
%
\bptok{imsref}%
\endbibitem

\bibitem[\protect\citeauthoryear{Gilks}{1996}]{Gilks99}
%
\begin{bbook}[mr]
\bauthor{\bsnm{Gilks},~\bfnm{W.~R.}\binits{W.~R.}}
(\byear{1996}).
\btitle{Markov Chain {M}onte {C}arlo in Practice}.
\bpublisher{Chapman \& Hall},
\blocation{London}.
\bid{doi={10.1007/978-1-4899-4485-6}, mr={1397966}}
\bptnote{check year}%
\end{bbook}
%
\bptok{imsref}%
\endbibitem

\bibitem[\protect\citeauthoryear{Glynn and Rhee}{2014}]{glynn2014exact}
%
\begin{barticle}[mr]
\bauthor{\bsnm{Glynn},~\bfnm{Peter~W.}\binits{P.~W.}} \AND
\bauthor{\bsnm{Rhee},~\bfnm{Chang-Han}\binits{C.-H.}}
(\byear{2014}).
\btitle{Exact estimation for {M}arkov chain equilibrium expectations}.
\bjournal{J. Appl. Probab.}
\bvolume{51A}
\bpages{377--389}.
\bid{doi={10.1239/jap/1417528487}, issn={0021-9002}, mr={3317370}}
\bptnote{check volume}%
\end{barticle}
%
\bptok{imsref}%
\endbibitem

\bibitem[\protect\citeauthoryear{Golub and Van~Loan}{1996}]{Golub1996}
%
\begin{bbook}[mr]
\bauthor{\bsnm{Golub},~\bfnm{Gene~H.}\binits{G.~H.}} \AND
\bauthor{\bsnm{Van Loan},~\bfnm{Charles~F.}\binits{C.~F.}}
(\byear{1996}).
\btitle{Matrix Computations},
\bedition{3rd} ed.
\bpublisher{Johns Hopkins Univ. Press},
\blocation{Baltimore, MD}.
\bid{mr={1417720}}
\end{bbook}
%
\bptok{imsref}%
\endbibitem

\bibitem[\protect\citeauthoryear{Goodreau, Kitts and
Morris}{2009}]{goodreau2009birds}
%
\begin{barticle}[pbm]
\bauthor{\bsnm{Goodreau},~\bfnm{Steven~M.}\binits{S.~M.}},
\bauthor{\bsnm{Kitts},~\bfnm{James~A.}\binits{J.~A.}} \AND
\bauthor{\bsnm{Morris},~\bfnm{Martina}\binits{M.}}
(\byear{2009}).
\btitle{Birds of a feather, or friend of a friend? Using exponential
random graph models to investigate adolescent social networks}.
\bjournal{Demography}
\bvolume{46}
\bpages{103--125}.
\bid{issn={0070-3370}, pmcid={2831261}, pmid={19348111}}
\end{barticle}
%
\bptok{imsref}%
\endbibitem

\bibitem[\protect\citeauthoryear{Green and
Richardson}{2002}]{green2002hidden}
%
\begin{barticle}[mr]
\bauthor{\bsnm{Green},~\bfnm{Peter~J.}\binits{P.~J.}} \AND
\bauthor{\bsnm{Richardson},~\bfnm{Sylvia}\binits{S.}}
(\byear{2002}).
\btitle{Hidden {M}arkov models and disease mapping}.
\bjournal{J. Amer. Statist. Assoc.}
\bvolume{97}
\bpages{1055--1070}.
\bid{doi={10.1198/016214502388618870}, issn={0162-1459}, mr={1951259}}
\end{barticle}
%
\bptok{imsref}%
\endbibitem

\bibitem[\protect\citeauthoryear{Grelaud, Robert and
Marin}{2009}]{grelaud2009abc}
%
\begin{barticle}[mr]
\bauthor{\bsnm{Grelaud},~\bfnm{Aude}\binits{A.}},
\bauthor{\bsnm{Robert},~\bfnm{Christian~P.}\binits{C.~P.}} \AND
\bauthor{\bsnm{Marin},~\bfnm{Jean-Michel}\binits{J.-M.}}
(\byear{2009}).
\btitle{A{BC} methods for model choice in {G}ibbs random fields}.
\bjournal{C. R. Math. Acad. Sci. Paris}
\bvolume{347}
\bpages{205--210}.
\bid{doi={10.1016/j.crma.2008.12.009}, issn={1631-073X}, mr={2538114}}
\end{barticle}
%
\bptok{imsref}%
\endbibitem

\bibitem[\protect\citeauthoryear{Gu and Zhu}{2001}]{gu2001maximum}
%
\begin{barticle}[mr]
\bauthor{\bsnm{Gu},~\bfnm{Ming~Gao}\binits{M.~G.}} \AND
\bauthor{\bsnm{Zhu},~\bfnm{Hong-Tu}\binits{H.-T.}}
(\byear{2001}).
\btitle{Maximum likelihood estimation for spatial models by {M}arkov
chain {M}onte {C}arlo stochastic approximation}.
\bjournal{J. R. Stat. Soc. Ser. B. Stat. Methodol.}
\bvolume{63}
\bpages{339--355}.
\bid{doi={10.1111/1467-9868.00289}, issn={1369-7412}, mr={1841419}}
\end{barticle}
%
\bptok{imsref}%
\endbibitem

\bibitem[\protect\citeauthoryear{Heikkinen and Hogmander}{1994}]{heikkinen1994fully}
%
\begin{barticle}[auto:parserefs-M02]
\bauthor{\bsnm{Heikkinen},~\bfnm{J.}\binits{J.}} \AND
\bauthor{\bsnm{Hogmander},~\bfnm{H.}\binits{H.}}
(\byear{1994}).
\btitle{Fully Bayesian approach to image restoration with an
application in biogeography}.
\bjournal{Applied Statistics}
\bvolume{43}
\bpages{569--582}.
\end{barticle}
%
\bptok{imsref}%
\endbibitem

\bibitem[\protect\citeauthoryear{Hendricks and
Booth}{1985}]{hendricks1985mcnp}
%
\begin{binproceedings}[auto]
\bauthor{\bsnm{Hendricks},~\bfnm{J.}\binits{J.}} \AND
\bauthor{\bsnm{Booth},~\bfnm{T.}\binits{T.}}
(\byear{1985}).
\btitle{Mcnp variance reduction overview}.
In \bbooktitle{Monte-Carlo Methods and Applications in Neutronics,
Photonics and Statistical Physics}
\bpages{83--92}.
\bpublisher{Springer},
\blocation{Berlin}.
\end{binproceedings}
%
\bptok{imsref}%
\endbibitem

\bibitem[\protect\citeauthoryear{Hughes, Haran and
Caragea}{2011}]{hughes2011autologistic}
%
\begin{barticle}[mr]
\bauthor{\bsnm{Hughes},~\bfnm{John}\binits{J.}},
\bauthor{\bsnm{Haran},~\bfnm{Murali}\binits{M.}} \AND
\bauthor{\bsnm{Caragea},~\bfnm{Petru{\c{t}}a~C.}\binits{P.~C.}}
(\byear{2011}).
\btitle{Autologistic models for binary data on a lattice}.
\bjournal{Environmetrics}
\bvolume{22}
\bpages{857--871}.
\bid{doi={10.1002/env.1102}, issn={1180-4009}, mr={2861051}}
\end{barticle}
%
\bptok{imsref}%
\endbibitem

\bibitem[\protect\citeauthoryear{Illian et~al.}{2012}]{illian2010fitting}
%
\begin{barticle}[auto:parserefs-M02]
\bauthor{\bsnm{Illian},~\bfnm{J.}\binits{J.}},
\bauthor{\bsnm{S{\o}rbye},~\bfnm{S.}\binits{S.}},
\bauthor{\bsnm{Rue},~\bfnm{H.}\binits{H.}} \AND
\bauthor{\bsnm{Hendrichsen},~\bfnm{D.}\binits{D.}}
(\byear{2012}).
\btitle{Using INLA to fit a complex point process model with temporally varying effects--a case study}.
\bjournal{J. Environ. Statist.}
\bvolume{3}
\bpages{1--25}.
\end{barticle}
%
\bptok{imsref}%
\endbibitem

\bibitem[\protect\citeauthoryear{Ising}{1925}]{ising1925beitrag}
%
\begin{barticle}[auto:parserefs-M02]
\bauthor{\bsnm{Ising},~\bfnm{E.}\binits{E.}}
(\byear{1925}).
\btitle{Beitrag zur Theorie des Ferromagnetismus}.
\bjournal{Zeitschrift F\"{u}r Physik A Hadrons and Nuclei}
\bvolume{31}
\bpages{253--258}.
\end{barticle}
%
\bptok{imsref}%
\endbibitem

\bibitem[\protect\citeauthoryear{Jacob and Thiery}{2013}]{jacob2013non}
%
\begin{bmisc}[auto:parserefs-M02]
\bauthor{\bsnm{Jacob},~\bfnm{P.~E.}\binits{P.~E.}} \AND
\bauthor{\bsnm{Thiery},~\bfnm{A.~H.}\binits{A.~H.}}
(\byear{2013}).
\bhowpublished{On non-negative unbiased estimators.
Preprint. Available at \arxivurl{arXiv:1309.6473}.}
\end{bmisc}
%
\bptok{imsref}%
\endbibitem

\bibitem[\protect\citeauthoryear{Jin and Liang}{2014}]{jin2014use}
%
\begin{barticle}[mr]
\bauthor{\bsnm{Jin},~\bfnm{Ick~Hoon}\binits{I.~H.}} \AND
\bauthor{\bsnm{Liang},~\bfnm{Faming}\binits{F.}}
(\byear{2014}).
\btitle{Use of SAMC for {B}ayesian analysis of statistical models with
intractable normalizing constants}.
\bjournal{Comput. Statist. Data Anal.}
\bvolume{71}
\bpages{402--416}.
\bid{doi={10.1016/j.csda.2012.07.005}, issn={0167-9473}, mr={3131979}}
\end{barticle}
%
\bptok{imsref}%
\endbibitem

\bibitem[\protect\citeauthoryear{Joo, Horvath and
Liu}{2003}]{joo2003kentucky}
%
\begin{barticle}[auto:parserefs-M02]
\bauthor{\bsnm{Joo},~\bfnm{B.}\binits{B.}},
\bauthor{\bsnm{Horvath},~\bfnm{I.}\binits{I.}} \AND
\bauthor{\bsnm{Liu},~\bfnm{K.}\binits{K.}}
(\byear{2003}).
\btitle{The Kentucky noisy Monte Carlo algorithm for Wilson dynamical
fermions}.
\bjournal{Phys. Rev. D}
\bvolume{67}
\bpages{074505}.
\end{barticle}
%
\bptok{imsref}%
\endbibitem

\bibitem[\protect\citeauthoryear{Jun and Stein}{2008}]{Jun2008}
%
\begin{barticle}[mr]
\bauthor{\bsnm{Jun},~\bfnm{Mikyoung}\binits{M.}} \AND
\bauthor{\bsnm{Stein},~\bfnm{Michael~L.}\binits{M.~L.}}
(\byear{2008}).
\btitle{Nonstationary covariance models for global data}.
\bjournal{Ann. Appl. Stat.}
\bvolume{2}
\bpages{1271--1289}.
\bid{doi={10.1214/08-AOAS183}, issn={1932-6157}, mr={2655659}}
\end{barticle}
%
\bptok{imsref}%
\endbibitem

\bibitem[\protect\citeauthoryear{Kendall}{2005}]{kendall2005notes}
%
\begin{bmisc}[automr]
\bauthor{\bsnm{Kendall},~\bfnm{W.~S.}\binits{W.~S.}}
(\byear{2005}).
\bhowpublished{Notes on perfect simulation. \textit{Markov Chain
Monte Carlo}: \textit{Innovations and Applications} \textbf{7}. World Scientific, Singapore}.
\end{bmisc}
%
\bptok{imsref}%
\endbibitem

\bibitem[\protect\citeauthoryear{Kennedy and Kuti}{1985}]{kennedy1985noise}
%
\begin{barticle}[auto:parserefs-M02]
\bauthor{\bsnm{Kennedy},~\bfnm{A.}\binits{A.}} \AND
\bauthor{\bsnm{Kuti},~\bfnm{J.}\binits{J.}}
(\byear{1985}).
\btitle{Noise without noise: A new Monte Carlo method}.
\bjournal{Phys. Rev. Lett.}
\bvolume{54}
\bpages{2473--2476}.
\end{barticle}
%
\bptok{imsref}%
\endbibitem

\bibitem[\protect\citeauthoryear{Kent}{1982}]{kent1982}
%
\begin{barticle}[mr]
\bauthor{\bsnm{Kent},~\bfnm{John~T.}\binits{J.~T.}}
(\byear{1982}).
\btitle{The {F}isher--{B}ingham distribution on the sphere}.
\bjournal{J. Roy. Statist. Soc. Ser. B}
\bvolume{44}
\bpages{71--80}.
\bid{issn={0035-9246}, mr={0655376}}
\end{barticle}
%
\bptok{imsref}%
\endbibitem

\bibitem[\protect\citeauthoryear{Korattikara, Chen and Welling}{2014}]{korattikara2014austerity}
%
\begin{binproceedings}[auto:parserefs-M02]
\bauthor{\bsnm{Korattikara},~\bfnm{A.}\binits{A.}},
\bauthor{\bsnm{Chen},~\bfnm{Y.}\binits{Y.}} \AND
\bauthor{\bsnm{Welling},~\bfnm{M.}\binits{M.}}
(\byear{2014}).
\btitle{Austerity in MCMC land: Cutting the Metropolis--Hastings budget}.
In \bbooktitle{Proceedings of the 31st International Conference on
Machine Learning}
\bpages{181--189}.
\bpublisher{JMLR Workshop and Conference Proceedings}.
\end{binproceedings}
%
\bptok{imsref}%
\endbibitem

\bibitem[\protect\citeauthoryear{Liang}{2010}]{liang2010double}
%
\begin{barticle}[mr]
\bauthor{\bsnm{Liang},~\bfnm{Faming}\binits{F.}}
(\byear{2010}).
\btitle{A double {M}etropolis--{H}astings sampler for spatial models
with intractable normalizing constants}.
\bjournal{J. Stat. Comput. Simul.}
\bvolume{80}
\bpages{1007--1022}.
\bid{doi={10.1080/00949650902882162}, issn={0094-9655}, mr={2742519}}
\end{barticle}
%
\bptok{imsref}%
\endbibitem

\bibitem[\protect\citeauthoryear{Liang, Liu and
Carroll}{2007}]{liang2007stochastic}
%
\begin{barticle}[mr]
\bauthor{\bsnm{Liang},~\bfnm{Faming}\binits{F.}},
\bauthor{\bsnm{Liu},~\bfnm{Chuanhai}\binits{C.}} \AND
\bauthor{\bsnm{Carroll},~\bfnm{Raymond~J.}\binits{R.~J.}}
(\byear{2007}).
\btitle{Stochastic approximation in {M}onte {C}arlo computation}.
\bjournal{J. Amer. Statist. Assoc.}
\bvolume{102}
\bpages{305--320}.
\bid{doi={10.1198/016214506000001202}, issn={0162-1459}, mr={2345544}}
\end{barticle}
%
\bptok{imsref}%
\endbibitem

\bibitem[\protect\citeauthoryear{Liechty, Liechty and M{\"
u}ller}{2009}]{liechty2009shadow}
%
\begin{barticle}[mr]
\bauthor{\bsnm{Liechty},~\bfnm{Merrill~W.}\binits{M.~W.}},
\bauthor{\bsnm{Liechty},~\bfnm{John~C.}\binits{J.~C.}} \AND
\bauthor{\bsnm{M{\"u}ller},~\bfnm{Peter}\binits{P.}}
(\byear{2009}).
\btitle{The shadow prior}.
\bjournal{J. Comput. Graph. Statist.}
\bvolume{18}
\bpages{368--383}.
\bid{doi={10.1198/jcgs.2009.07072}, issn={1061-8600}, mr={2749837}}
\end{barticle}
%
\bptok{imsref}%
\endbibitem

\bibitem[\protect\citeauthoryear{Lin, Liu and Sloan}{2000}]{lin2000noisy}
%
\begin{barticle}[auto:parserefs-M02]
\bauthor{\bsnm{Lin},~\bfnm{L.}\binits{L.}},
\bauthor{\bsnm{Liu},~\bfnm{K.}\binits{K.}} \AND
\bauthor{\bsnm{Sloan},~\bfnm{J.}\binits{J.}}
(\byear{2000}).
\btitle{A noisy Monte Carlo algorithm}.
\bjournal{Phys. Rev. D}
\bvolume{61}
\bpages{074505}.
\end{barticle}
%
\bptok{imsref}%
\endbibitem

\bibitem[\protect\citeauthoryear{Lindgren, Rue and Lindstr{\"
o}m}{2011}]{Lindgren2011}
%
\begin{barticle}[mr]
\bauthor{\bsnm{Lindgren},~\bfnm{Finn}\binits{F.}},
\bauthor{\bsnm{Rue},~\bfnm{H{\aa}vard}\binits{H.}} \AND
\bauthor{\bsnm{Lindstr{\"o}m},~\bfnm{Johan}\binits{J.}}
(\byear{2011}).
\btitle{An explicit link between {G}aussian fields and {G}aussian
{M}arkov random fields: The stochastic partial differential equation approach}.
\bjournal{J.~R. Stat. Soc. Ser. B. Stat. Methodol.}
\bvolume{73}
\bpages{423--498}.
\bid{doi={10.1111/j.1467-9868.2011.00777.x}, issn={1369-7412}, mr={2853727}}
\bptnote{check related}%
\end{barticle}
%
\bptok{imsref}%
\endbibitem

\bibitem[\protect\citeauthoryear{Liu}{2001}]{Liu01}
%
\begin{bbook}[mr]
\bauthor{\bsnm{Liu},~\bfnm{Jun~S.}\binits{J.~S.}}
(\byear{2001}).
\btitle{Monte {C}arlo Strategies in Scientific Computing}.
\bpublisher{Springer},
\blocation{New York}.
\bid{mr={1842342}}
\end{bbook}
%
\bptok{imsref}%
\endbibitem

\bibitem[\protect\citeauthoryear{Lux and Koblinger}{1991}]{lux1991monte}
%
\begin{bbook}[auto:parserefs-M02]
\bauthor{\bsnm{Lux},~\bfnm{I.}\binits{I.}} \AND
\bauthor{\bsnm{Koblinger},~\bfnm{L.}\binits{L.}}
(\byear{1991}).
\btitle{Monte Carlo Particle Transport Methods: Neutron and Photon
Calculations, Vol.~102}.
\bpublisher{CRC press},
\blocation{Boca Raton}.
\end{bbook}
%
\bptok{imsref}%
\endbibitem

\bibitem[\protect\citeauthoryear{MacKay}{2003}]{mackay2003information}
%
\begin{bbook}[mr]
\bauthor{\bsnm{MacKay},~\bfnm{David~J.~C.}\binits{D.~J.~C.}}
(\byear{2003}).
\btitle{Information Theory, Inference and Learning Algorithms}.
\bpublisher{Cambridge Univ. Press},
\blocation{New York}.
\bid{mr={2012999}}
\end{bbook}
%
\bptok{imsref}%
\endbibitem

\bibitem[\protect\citeauthoryear{Marin et~al.}{2012}]{marinabc2012}
%
\begin{barticle}[mr]
\bauthor{\bsnm{Marin},~\bfnm{Jean-Michel}\binits{J.-M.}},
\bauthor{\bsnm{Pudlo},~\bfnm{Pierre}\binits{P.}},
\bauthor{\bsnm{Robert},~\bfnm{Christian~P.}\binits{C.~P.}} \AND
\bauthor{\bsnm{Ryder},~\bfnm{Robin~J.}\binits{R.~J.}}
(\byear{2012}).
\btitle{Approximate {B}ayesian computational methods}.
\bjournal{Stat. Comput.}
\bvolume{22}
\bpages{1167--1180}.
\bid{doi={10.1007/s11222-011-9288-2}, issn={0960-3174}, mr={2992292}}
\bptnote{check volume, check pages, check year}%
\end{barticle}
%
\bptok{imsref}%
\endbibitem

\bibitem[\protect\citeauthoryear{McLeish}{2011}]{mcleish2011general}
%
\begin{barticle}[mr]
\bauthor{\bsnm{McLeish},~\bfnm{Don}\binits{D.}}
(\byear{2011}).
\btitle{A general method for debiasing a {M}onte {C}arlo estimator}.
\bjournal{Monte Carlo Methods Appl.}
\bvolume{17}
\bpages{301--315}.
\bid{doi={10.1515/mcma.2011.013}, issn={0929-9629}, mr={2890424}}
\end{barticle}
%
\bptok{imsref}%
\endbibitem

\bibitem[\protect\citeauthoryear{M{\o}ller and Waagepetersen}{2004}]{moller2003statistical}
%
\begin{bbook}[mr]
\bauthor{\bsnm{M{\o}ller},~\bfnm{Jesper}\binits{J.}} \AND
\bauthor{\bsnm{Waagepetersen},~\bfnm{Rasmus~Plenge}\binits{R.~P.}}
(\byear{2004}).
\btitle{Statistical Inference and Simulation for Spatial Point Processes}.
\bpublisher{Chapman \& Hall/CRC},
\blocation{Boca Raton, FL}.
\bid{mr={2004226}}
\bptnote{check year}%
\end{bbook}
%
\bptok{imsref}%
\endbibitem

\bibitem[\protect\citeauthoryear{M{\o}ller
et~al.}{2006}]{moller2006efficient}
%
\begin{barticle}[mr]
\bauthor{\bsnm{M{\o}ller},~\bfnm{J.}\binits{J.}},
\bauthor{\bsnm{Pettitt},~\bfnm{A.~N.}\binits{A.~N.}},
\bauthor{\bsnm{Reeves},~\bfnm{R.}\binits{R.}} \AND
\bauthor{\bsnm{Berthelsen},~\bfnm{K.~K.}\binits{K.~K.}}
(\byear{2006}).
\btitle{An efficient {M}arkov chain {M}onte {C}arlo method for
distributions with intractable normalising constants}.
\bjournal{Biometrika}
\bvolume{93}
\bpages{451--458}.
\bid{doi={10.1093/biomet/93.2.451}, issn={0006-3444}, mr={2278096}}
\end{barticle}
%
\bptok{imsref}%
\endbibitem

\bibitem[\protect\citeauthoryear{Moores, Mengersen and Robert}{2014}]{moores2014pre}
%
\begin{bmisc}[auto]
\bauthor{\bsnm{Moores},~\bfnm{M. T.}\binits{M.~T.}},
\bauthor{\bsnm{Mengersen},~\bfnm{K.}\binits{K.}},
\bauthor{\bsnm{Robert},~\bfnm{C. P.}\binits{C.~P.}}
(\byear{2014}).
\bhowpublished{Pre-processing for approximate bayesian computation in image analysis.
Preprint. Available at \arxivurl{arXiv:1403.4359}.}
\end{bmisc}
%
\bptok{imsref}%
\endbibitem

\bibitem[\protect\citeauthoryear{Murray, Ghahramani and MacKay}{2006}]{murray2006mcmc}
%
\begin{binproceedings}[auto:parserefs-M02]
\bauthor{\bsnm{Murray},~\bfnm{I.}\binits{I.}},
\bauthor{\bsnm{Ghahramani},~\bfnm{Z.}\binits{Z.}} \AND
\bauthor{\bsnm{MacKay},~\bfnm{D.}\binits{D.}}
(\byear{2006}).
\btitle{MCMC for doubly-intractable distributions}.
In \bbooktitle{Proceedings of the 22nd Annual Conference on
Uncertainty in Artificial Intelligence (UAI-06)}
\bpages{359--366}.
\bpublisher{AUAI Press}, \blocation{Arlington, VI}.
\end{binproceedings}
%
\bptok{imsref}%
\endbibitem

\bibitem[\protect\citeauthoryear{Neal}{2001}]{Neal98annealedimportance}
%
\begin{barticle}[mr]
\bauthor{\bsnm{Neal},~\bfnm{Radford~M.}\binits{R.~M.}}
(\byear{2001}).
\btitle{Annealed importance sampling}.
\bjournal{Stat. Comput.}
\bvolume{11}
\bpages{125--139}.
\bid{doi={10.1023/A:1008923215028}, issn={0960-3174}, mr={1837132}}
\bptnote{check year}%
\end{barticle}
%
\bptok{imsref}%
\endbibitem

\bibitem[\protect\citeauthoryear{Papaspiliopoulos}{2011}]{papaspil2011}
%
\begin{bincollection}[mr]
\bauthor{\bsnm{Papaspiliopoulos},~\bfnm{Omiros}\binits{O.}}
(\byear{2011}).
\btitle{Monte {C}arlo probabilistic inference for diffusion processes:
A methodological framework}.
In \bbooktitle{Bayesian Time Series Models}
\bpages{82--103}.
\bpublisher{Cambridge Univ. Press},
\blocation{Cambridge}.
\bid{mr={2894234}}
\bptnote{check pages}%
\end{bincollection}
%
\bptok{imsref}%
\endbibitem

\bibitem[\protect\citeauthoryear{Propp and Wilson}{1996}]{propp1996exact}
%
\begin{barticle}[auto]
\bauthor{\bsnm{Propp},~\bfnm{James~Gary}\binits{J.~G.}} \AND
\bauthor{\bsnm{Wilson},~\bfnm{David~Bruce}\binits{D.~B.}}
(\byear{1996}).
\btitle{Exact sampling with coupled Markov chains and applications to
statistical mechanics}.
\bjournal{Random Structures Algorithms}
\bvolume{9}
\bpages{223--252}.
\end{barticle}
%
\bptok{imsref}%
\endbibitem

\bibitem[\protect\citeauthoryear{Rhee and Glynn}{2012}]{rhee2012new}
%
\begin{binproceedings}[auto:parserefs-M02]
\bauthor{\bsnm{Rhee},~\bfnm{C.-H.}\binits{C.-H.}} \AND
\bauthor{\bsnm{Glynn},~\bfnm{P.~W.}\binits{P.~W.}}
(\byear{2012}).
\btitle{A new approach to unbiased estimation for SDE's}.
In \bbooktitle{Proceedings of the Winter Simulation Conference, WSC'12, Berlin, Germany  17:1--17:7}.
\bpublisher{Winter Simulation Conference}.
\end{binproceedings}
%
\bptok{imsref}%
\endbibitem

\bibitem[\protect\citeauthoryear{Robert and
Casella}{2010}]{RRobert+Casella2010}
%
\begin{bbook}[mr]
\bauthor{\bsnm{Robert},~\bfnm{Christian~P.}\binits{C.~P.}} \AND
\bauthor{\bsnm{Casella},~\bfnm{George}\binits{G.}}
(\byear{2010}).
\btitle{Introducing {M}onte {C}arlo Methods with {R}}.
\bpublisher{Springer},
\blocation{New York}.
\bid{doi={10.1007/978-1-4419-1576-4}, mr={2572239}}
\end{bbook}
%
\bptok{imsref}%
\endbibitem

\bibitem[\protect\citeauthoryear{Rue and Held}{2005}]{RueGMRFBook}
%
\begin{bbook}[mr]
\bauthor{\bsnm{Rue},~\bfnm{H{\aa}vard}\binits{H.}} \AND
\bauthor{\bsnm{Held},~\bfnm{Leonhard}\binits{L.}}
(\byear{2005}).
\btitle{Gaussian {M}arkov Random Fields: Theory and Applications}.
\bseries{Monographs on Statistics and Applied Probability}
\bvolume{104}.
\bpublisher{Chapman \& Hall/CRC},
\blocation{Boca Raton, FL}.
\bid{doi={10.1201/9780203492024}, mr={2130347}}
\end{bbook}
%
\bptok{imsref}%
\endbibitem

\bibitem[\protect\citeauthoryear{Rue, Martino and
Chopin}{2009}]{rue2009approximate}
%
\begin{barticle}[mr]
\bauthor{\bsnm{Rue},~\bfnm{H{\aa}vard}\binits{H.}},
\bauthor{\bsnm{Martino},~\bfnm{Sara}\binits{S.}} \AND
\bauthor{\bsnm{Chopin},~\bfnm{Nicolas}\binits{N.}}
(\byear{2009}).
\btitle{Approximate {B}ayesian inference for latent {G}aussian models
by using integrated nested {L}aplace approximations}.
\bjournal{J. R. Stat. Soc. Ser. B. Stat. Methodol.}
\bvolume{71}
\bpages{319--392}.
\bid{doi={10.1111/j.1467-9868.2008.00700.x}, issn={1369-7412}, mr={2649602}}
\end{barticle}
%
\bptok{imsref}%
\endbibitem

\bibitem[\protect\citeauthoryear{Schr{\"o}dle and
Held}{2011}]{schrodle2011spatio}
%
\begin{barticle}[mr]
\bauthor{\bsnm{Schr{\"o}dle},~\bfnm{Birgit}\binits{B.}} \AND
\bauthor{\bsnm{Held},~\bfnm{Leonhard}\binits{L.}}
(\byear{2011}).
\btitle{Spatio-temporal disease mapping using {INLA}}.
\bjournal{Environmetrics}
\bvolume{22}
\bpages{725--734}.
\bid{doi={10.1002/env.1065}, issn={1180-4009}, mr={2843139}}
\end{barticle}
%
\bptok{imsref}%
\endbibitem

\bibitem[\protect\citeauthoryear{Shaby}{2014}]{shaby2014open}
%
\begin{barticle}[mr]
\bauthor{\bsnm{Shaby},~\bfnm{Benjamin~A.}\binits{B.~A.}}
(\byear{2014}).
\btitle{The open-faced sandwich adjustment for MCMC using estimating
functions}.
\bjournal{J. Comput. Graph. Statist.}
\bvolume{23}
\bpages{853--876}.
\bid{doi={10.1080/10618600.2013.842174}, issn={1061-8600}, mr={3224659}}
\end{barticle}
%
\bptok{imsref}%
\endbibitem

\bibitem[\protect\citeauthoryear{Sherlock
et~al.}{2015}]{sherlock2013efficiency}
%
\begin{barticle}[mr]
\bauthor{\bsnm{Sherlock},~\bfnm{Chris}\binits{C.}},
\bauthor{\bsnm{Thiery},~\bfnm{Alexandre~H.}\binits{A.~H.}},
\bauthor{\bsnm{Roberts},~\bfnm{Gareth~O.}\binits{G.~O.}} \AND
\bauthor{\bsnm{Rosenthal},~\bfnm{Jeffrey~S.}\binits{J.~S.}}
(\byear{2015}).
\btitle{On the efficiency of pseudo-marginal random walk {M}etropolis
algorithms}.
\bjournal{Ann. Statist.}
\bvolume{43}
\bpages{238--275}.
\bid{doi={10.1214/14-AOS1278}, issn={0090-5364}, mr={3285606}}
\bptnote{check volume, check pages, check year}%
\end{barticle}
%
\bptok{imsref}%
\endbibitem

\bibitem[\protect\citeauthoryear{Silvertown and
Antonovics}{2001}]{silvertown2001integrating}
%
\begin{bbook}[auto:parserefs-M02]
\bauthor{\bsnm{Silvertown},~\bfnm{J.}\binits{J.}} \AND
\bauthor{\bsnm{Antonovics},~\bfnm{J.}\binits{J.}}
(\byear{2001}).
\btitle{Integrating Ecology and Evolution in a Spatial Context: 14th
Special Symposium of the British Ecological Society}
\bvolume{14}.
\bpublisher{Cambridge Univ. Press},
\blocation{Cambridge}.
\end{bbook}
%
\bptok{imsref}%
\endbibitem

\bibitem[\protect\citeauthoryear{Tavar{\'{e}}
et~al.}{1997}]{tavare1997inferring}
%
\begin{barticle}[auto:parserefs-M02]
\bauthor{\bsnm{Tavar{\'{e}}},~\bfnm{S.}\binits{S.}},
\bauthor{\bsnm{Balding},~\bfnm{D.~J.}\binits{D.~J.}},
\bauthor{\bsnm{Griffiths},~\bfnm{R.~C.}\binits{R.~C.}} \AND
\bauthor{\bsnm{Donnelly},~\bfnm{P.}\binits{P.}}
(\byear{1997}).
\btitle{Inferring coalescence times from dna sequence data}.
\bjournal{Genetics}
\bvolume{145}
\bpages{505--518}.
\end{barticle}
%
\bptok{imsref}%
\endbibitem

\bibitem[\protect\citeauthoryear{Taylor and Diggle}{2014}]{taylor2013inla}
%
\begin{barticle}[mr]
\bauthor{\bsnm{Taylor},~\bfnm{Benjamin~M.}\binits{B.~M.}} \AND
\bauthor{\bsnm{Diggle},~\bfnm{Peter~J.}\binits{P.~J.}}
(\byear{2014}).
\btitle{I{NLA} or MCMC? A~tutorial and comparative evaluation for
spatial prediction in log-{G}aussian {C}ox processes}.
\bjournal{J. Stat. Comput. Simul.}
\bvolume{84}
\bpages{2266--2284}.
\bid{doi={10.1080/00949655.2013.788653}, issn={0094-9655}, mr={3223624}}
\bptnote{check volume, check pages, check year}%
\end{barticle}
%
\bptok{imsref}%
\endbibitem

\bibitem[\protect\citeauthoryear{Troyer and
Wiese}{2005}]{PhysRevLett94170201}
%
\begin{barticle}[auto:parserefs-M02]
\bauthor{\bsnm{Troyer},~\bfnm{M.}\binits{M.}} \AND
\bauthor{\bsnm{Wiese},~\bfnm{U.-J.}\binits{U.-J.}}
(\byear{2005}).
\btitle{Computational complexity and fundamental limitations to
fermionic quantum Monte Carlo simulations}.
\bjournal{Phys. Rev. Lett.}
\bvolume{94}
\bpages{170201}.
\end{barticle}
%
\bptok{imsref}%
\endbibitem

\bibitem[\protect\citeauthoryear{Van~Duijn, Gile and
Handcock}{2009}]{van2009framework}
%
\begin{barticle}[auto:parserefs-M02]
\bauthor{\bsnm{Van Duijn},~\bfnm{M.~A.}\binits{M.~A.}},
\bauthor{\bsnm{Gile},~\bfnm{K.~J.}\binits{K.~J.}} \AND
\bauthor{\bsnm{Handcock},~\bfnm{M.~S.}\binits{M.~S.}}
(\byear{2009}).
\btitle{A~framework for the comparison of maximum pseudo-likelihood
and maximum likelihood estimation of exponential family random graph models}.
\bjournal{Social Networks}
\bvolume{31}
\bpages{52--62}.
\end{barticle}
%
\bptok{imsref}%
\endbibitem

\bibitem[\protect\citeauthoryear{Walker}{2011}]{walker2011posterior}
%
\begin{barticle}[mr]
\bauthor{\bsnm{Walker},~\bfnm{Stephen~G.}\binits{S.~G.}}
(\byear{2011}).
\btitle{Posterior sampling when the normalizing constant is unknown}.
\bjournal{Comm. Statist. Simulation Comput.}
\bvolume{40}
\bpages{784--792}.
\bid{doi={10.1080/03610918.2011.555042}, issn={0361-0918}, mr={2783887}}
\end{barticle}
%
\bptok{imsref}%
\endbibitem

\bibitem[\protect\citeauthoryear{Walker}{2014}]{walker2014bayesian}
%
\begin{barticle}[mr]
\bauthor{\bsnm{Walker},~\bfnm{Stephen~G.}\binits{S.~G.}}
(\byear{2014}).
\btitle{A {B}ayesian analysis of the {B}ingham distribution}.
\bjournal{Braz. J. Probab. Stat.}
\bvolume{28}
\bpages{61--72}.
\bid{doi={10.1214/12-BJPS193}, issn={0103-0752}, mr={3165428}}
\end{barticle}
%
\bptok{imsref}%
\endbibitem

\bibitem[\protect\citeauthoryear{Wang and Landau}{2001}]{wang2001efficient}
%
\begin{barticle}[auto:parserefs-M02]
\bauthor{\bsnm{Wang},~\bfnm{F.}\binits{F.}} \AND
\bauthor{\bsnm{Landau},~\bfnm{D.~P.}\binits{D.~P.}}
(\byear{2001}).
\btitle{Efficient, multiple-range random walk algorithm to calculate
the density of states}.
\bjournal{Phys. Rev. Lett.}
\bvolume{86}
\bpages{2050}.
\end{barticle}
%
\bptok{imsref}%
\endbibitem

\bibitem[\protect\citeauthoryear{Welling and Teh}{2011}]{welling2011bayesian}
%
\begin{binproceedings}[auto:parserefs-M02]
\bauthor{\bsnm{Welling},~\bfnm{M.}\binits{M.}} \AND
\bauthor{\bsnm{Teh},~\bfnm{Y.~W.}\binits{Y.~W.}}
(\byear{2011}).
\btitle{Bayesian learning via stochastic gradient Langevin dynamics}.
In \bbooktitle{Proceedings of the 28th International Conference on
Machine Learning}
\bpages{681--688}.
\bpublisher{Omnipress}, \blocation{Madison, WI}.
\end{binproceedings}
%
\bptok{imsref}%
\endbibitem

\bibitem[\protect\citeauthoryear{Zhang et~al.}{2012}]{zhang2012continuous}
%
\begin{bincollection}[auto:parserefs-M02]
\bauthor{\bsnm{Zhang},~\bfnm{Y.}\binits{Y.}},
\bauthor{\bsnm{Ghahramani},~\bfnm{Z.}\binits{Z.}},
\bauthor{\bsnm{Storkey},~\bfnm{A.~J.}\binits{A.~J.}} \AND
\bauthor{\bsnm{Sutton},~\bfnm{C.~A.}\binits{C.~A.}}
(\byear{2012}).
\btitle{Continuous relaxations for discrete Hamiltonian Monte Carlo}.
In \bbooktitle{Advances in Neural Information Processing Systems}
\bvolume{4}
\bpages{3194--3202}.
\end{bincollection}
%
\bptok{imsref}%
\endbibitem

\bibitem[\protect\citeauthoryear{Zhou and Schmidler}{2009}]{zhou2009bayesian}
%
\begin{bmisc}[auto:parserefs-M02]
\bauthor{\bsnm{Zhou},~\bfnm{X.}\binits{X.}} \AND
\bauthor{\bsnm{Schmidler},~\bfnm{S.}\binits{S.}}
(\byear{2009}).
\bhowpublished{Bayesian parameter estimation in Ising and Potts
models: A comparative study with applications to protein modeling.
Technical report}.
\end{bmisc}
%
\bptok{imsref}%
\endbibitem

\end{thebibliography}
\end{document}